\documentclass[12pt]{article}

\usepackage{amsmath}
\usepackage{amssymb}
\usepackage{amsthm}
\usepackage{mathrsfs}

\usepackage{slashed}
\usepackage{color}
\usepackage{esint}

\usepackage{graphicx}
\usepackage[all,cmtip]{xy}
\usepackage{hyperref}

\newtheorem{thm}{Theorem}
\newtheorem{lemma}{Lemma}
\newtheorem{prop}{Proposition}

\newtheorem{defn}{Definition}
\newtheorem{examp}{Example}

\hypersetup{
    pdfborder = {0 0 0},%
    colorlinks,%
    citecolor=blue,%
    filecolor=black,%
    linkcolor=red,
    urlcolor=green
}

\begin{document}

\title{The Stiefel--Whitney theory of \\
         topological insulators}
\author{ Ralph M. Kaufmann \thanks{e-mail: rkaufman@math.purdue.edu  }
           \and
         Dan Li  \thanks{ e-mail: li1863@math.purdue.edu } \\
            \and
        Birgit  Wehefritz-Kaufmann   \thanks{  e-mail: ebkaufma@math.purdue.edu     }
	\\			
	\\			
        Keywords: topological insulator, topological $\mathbb{Z}_2$ invariant, \\  
         Stiefel--Whitney class, cobordism theory, extended TQFT   
	    }

\date{}
\maketitle

\begin{abstract} 
We study the topological band theory of time reversal invariant topological insulators
and interpret the topological $\mathbb{Z}_2$ invariant as an obstruction in terms of Stiefel--Whitney classes.  
The band structure of a topological insulator defines a Pfaffian line bundle over the momentum space, whose
 structure group can be reduced to $\mathbb{Z}_2$.
So the  topological $\mathbb{Z}_2$ invariant will be understood by the Stiefel--Whitney theory, 
which detects the orientability of a principal $\mathbb{Z}_2$-bundle.
Moreover, the relation between weak and strong topological insulators will be understood  
based on cobordism theory. Finally, 
the topological $\mathbb{Z}_2$ invariant gives rise to a fully extended topological quantum field theory (TQFT).
\end{abstract}

\section{Introduction}\label{Intro}
Time reversal invariant topological insulators, or simply topological insulators,
are new materials with conducting edge states protected by the time reversal $\mathbb{Z}_2$ symmetry and charge conservation (a $U(1)$ symmetry) \cite{HK10}.
The electronic band structure of a topological insulator
defines  a vector bundle  over the 
momentum space, which is called the Bloch bundle by physicists \cite{KLW15}.
The Bloch bundle is naturally equipped with a Hermitian metric, 
so it becomes an even rank Hilbert bundle, i.e., each fiber is an inner product vector space.
The characteristic feature of a topological insulator  is determined by its top band, i.e., the top subbundle of the Bloch bundle.
So we are mostly interested in this top subbundle and call it the Hilbert bundle,  which is a rank 2 complex vector bundle. 
This Hilbert bundle is usually the main geometric object to study in the band theory of a topological insulator.

Because of the time reversal $\mathbb{Z}_2$-symmetry, topological insulators are characterized by a $\mathbb{Z}_2$-valued invariant, 
which is  called the topological $\mathbb{Z}_2$ invariant.
There are many equivalent characterizations  of the topological $\mathbb{Z}_2$ invariant, 
we will use the Kane--Mele invariant as its definition in this paper \cite{FK06}.
The physical meaning of the topological $\mathbb{Z}_2$ invariant is the existence of an unpaired Majorana zero mode, 
i.e., geometrically a conical singularity created by the intersection of edge states. In the Hilbert bundle, 
intersections of edge states correspond to intersections of sections with opposite orientation.

Based on the topological band theory, we will focus on the concept of orientability and study the topological $\mathbb{Z}_2$ invariant 
 from the perspective of obstruction theory.
One main result of this paper is that the topological $\mathbb{Z}_2$ invariant can be understood by Stiefel--Whitney classes of a Pfaffian line bundle over the momentum space.
To this end, we will introduce the associated Pfaffian line bundle of a topological insulator, which will be viewed as a \emph{real} line bundle
 and compared to the  M{\"o}bius strip.  So instead of the Hilbert bundle, we will focus on the Pfaffian line bundle.
Physically, a chiral zero mode 
(globally represented by the Pfaffian line bundle) together with the time reversal symmetry completely determine a Majorana zero mode
(globally represented by the Hilbert bundle). 
As a principal $\mathbb{Z}_2$-bundle, the Pfaffian line bundle is topologically characterized by its Stiefel--Whitney classes.
Therefore, the topological $\mathbb{Z}_2$ invariant is nonzero if and only if the Stiefel--Whitney class of the Pfaffian line bundle is nonzero. 

Furthermore, we will try to understand the topological $\mathbb{Z}_2$ invariant
in the context of cobordism. First of all, the momentum space $X$ is viewed as a CW complex with a $\mathbb{Z}_2$ 
action defined by the time reversal symmetry, i.e., a $\mathbb{Z}_2$-CW complex.
Then the $\mathbb{Z}_2$-cell decomposition of $X$ induces cobordant relations between $k$ and $(k+1)$-cells of $X$, and the Pfaffian line bundle over $X$ 
is restricted to  $k$ or $(k+1)$-cells. When $X $ is a $n$-torus, it can be viewed as a cobordism with the boundary given by two sub-tori.
In this case, one important observation is that
the topological $\mathbb{Z}_2$ invariant is trivial if and only if  
the total spaces of the restricted Pfaffian bundles over two sub-tori are cobordant. And this is equivalent to the fact that 
the Stiefel--Whitney classes of the restricted Pfaffian bundles are the same.
As a  consequence,  the relation between the weak  and strong topological insulators is understood as a cobordant relation.
Moreover, a cobordism also induces an addition between the Stiefel--Whitney classes of its boundaries, 
which gives rise to a higher order Stiefel--Whitney class of the cobordism itself.

In other words, the momentum space $X$ can be viewed as a bordism category based on the $\mathbb{Z}_2$-CW decomposition.
So the topological $\mathbb{Z}_2$ invariant defines a fully extended topological quantum field theory (TQFT), 
i.e., a symmetric monoidal functor from the bordism category to the category of vector spaces.
For $3$-dimensional topological insulators, this fully extended TQFT is completely determined by the $0$-dimensional fixed points of the time reversal symmetry.
   
This paper is organized as follows.  In section \ref{TopBand}, we review the topological band theory of a topological insulator and 
interpret the topological $\mathbb{Z}_2$ invariant as an obstruction to orientability using Stiefel--Whitney classes.
In section \ref{Cobthy}, we use the knowledge of unoriented cobordism to understand the topological $\mathbb{Z}_2$ invariant and the  
relation between weak  and strong topological insulators.   In section \ref{extTQFT}, we reformulate the 
topological $\mathbb{Z}_2$ invariant as a fully extended topological quantum field theory.
For convenience, we put some background knowledge 
such as the M{\"o}bius strip, Stiefel--Whitney class, Berry phase and Kane--Mele invariant in the appendices \ref{Appdx}.

 \section{Topological band theory}\label{TopBand}
 In this section, we will review the topological band theory of a topological insulator, that is, 
 the relevant vector bundles defined by the electronic band structure in the presence of the time reversal symmetry. 
 And the emphasis will be placed on the associated Pfaffian line bundle, whose structure group will be reduced to $\mathbb{Z}_2$, i.e., $\mathbb{Z}/2\mathbb{Z}$.
 Next we will analyze the $1$-dimensional case carefully, and compare the Pfaffian line bundle over the circle to the M{\"o}bius strip.
 As a result, the topological $\mathbb{Z}_2$ invariant 
 will be understood as an obstruction to the orientability of the Pfaffian line bundle using Stiefel--Whitney classes in different dimensions.
 
 \subsection{Pfaffian line bundle}
 In condensed matter physics, one studies the lattice model of a material,
 and investigates the effective Hamiltonian of some (quasi-)particles such as electrons defined over the lattice.
 If one only considers the translational symmetry of a lattice, 
 then the Pontryagin dual of this lattice 
 is the momentum space. Let $X$ be a compact space  without boundary representing the momentum space. 
 For example, the torus $\mathbb{T}^d$ is the momentum space of the lattice $\mathbb{Z}^d$, and the sphere $\mathbb{S}^d$ 
 is viewed as the momentum space  of the continuous limit of $\mathbb{Z}^d$, i.e., $\mathbb{R}^d$. The torus $\mathbb{T}^d$ and 
 the sphere $\mathbb{S}^d$ will be the main examples in this paper.

Time reversal symmetry is a $\mathbb{Z}_2$ symmetry that basically changes the direction of time. On a momentum space $X$, 
time reversal symmetry defines
an involutive time reversal transformation $\tau$, that is, a homeomorphism flips the sign of a local coordinate,
$$
\tau: X \rightarrow X; \quad x \mapsto -x, \quad \text{with} \,\,\,\, \tau^2 = id_X
$$
So the momentum space $X$ becomes an involutive space $(X, \tau)$ with the involution $\tau$, 
which is also called a Real space with the real structure $\tau$.

 The set of fixed points of the time reversal transformation is denoted by,
 $$
 X^\tau = \{ x \in X ~|~ \tau(x) = x\}
 $$
 Since $X$ is a compact space without boundary, we assume that $X^\tau$ is a finite set with even number of isolated fixed points. 
 $X^\tau$ is also viewed as the set of \emph{real} points with respect to the real structure $\tau$.
 
 The involutive space $(X, \tau)$ has the structure of a $\mathbb{Z}_2$-CW complex, 
 that is, there is a $\mathbb{Z}_2$-equivariant cellular decomposition of $X$, see \cite{DG14} for concrete examples. In the other way around,
 starting with the fixed points $X^\tau$, $X$ can be built up by gluing cells that carry a free $\mathbb{Z}_2$ action, i.e., $\mathbb{Z}_2$-cells. 
 Such equivariant cellular decomposition is very useful in the computation of cohomology or K-theory of $X$ \cite{KLW1501}.
 This construction is closely related to the stable splitting of $X$ into spheres respecting the time reversal symmetry \cite{FM13}.
  
 Let  $H$ be a time reversal invariant Hamiltonian of a topological insulator, 
 and $\psi$ be an eigenstate representing an electronic state.
Time reversal symmetry defines 
an anti-unitary operator $\Theta$, called the time reversal operator, acting on  $\psi$ with the property $\Theta^2 = -1$,
 so that the states $\psi$ and $\Theta \psi$ are orthogonal to each other, i.e., $\langle \psi, \Theta \psi \rangle = 0$.
 Either $\psi$ or $\Theta \psi$ is called a chiral state, and they have opposite chirality 
 since time reversal symmetry reverses the direction of a state, that is, if 
 $\psi$ is  right-moving, then $\Theta \psi$ is  left-moving.
 With the time reversal operator $\Theta$ and the time reversal transformation $\tau$, a time reversal invariant Hamiltonian $H$
 must satisfy the condition $\Theta H(x) \Theta^* = H(\tau(x))$.
 
 In the eigenvalue equation of the Hamiltonian $H$, i.e., $H \psi_n = E_n \psi_n$, the energy level $E_n : X \rightarrow \mathbb{R}$ is 
 called the $n$-th band function. The band structure of a topological insulator 
 is defined by the band functions $\{ E_n,  1 \leq n \leq N \} $ for all filled bands,
 where $N$ is a positive integer labels the top occupied band. 
 For simplicity, we assume that the band structure is non-degenerate, that is, there is no 
 band crossing between different bands. However, the chiral states $\psi_n$ and $\Theta \psi_n$ have the same energy level $E_n$
 so that each band is doubly degenerate for a time reversal invariant topological insulator.
 
 \begin{defn}
   The band structure of a topological insulator defines a vector bundle over the momentum space with even rank, denoted by $\pi: \mathcal{B} \rightarrow X$,
   which is called the Bloch bundle by physicists. 
   
 \end{defn}
 
 Physically, the Bloch bundle models the finite dimensional physical Hilbert space of electronic states in the occupied bands. 
 So the Bloch bundle is naturally equipped with a Hermitian metric derived from the physical Hilbert space. 
 In other words, $\mathcal{B}$ is a Hilbert bundle of finite rank, 
 i.e., each fiber is an inner product vector space.
 By the non-degeneracy of the Bloch bundle, it can be decomposed as a Whitney sum of subbundles, i.e., 
 $\mathcal{B} = \oplus_{i = 1}^N \mathcal{B}_i $ where $N$ is the number of filled bands.
 Since the characteristic feature of a topological insulator is completely determined by its top occupied band, 
 we only care about the top subbundle $\mathcal{B}_N $.
 
 \begin{defn}
  The Hilbert bundle $\pi: \mathcal{H} \rightarrow X$ is defined as the top Bloch subbundle, i.e., $\mathcal{H} = \mathcal{B}_N$.
  \end{defn}

 So the rank of the Hilbert bundle $ \mathcal{H}$ is 2, and 
  physical states in the top band are modeled by  elements in the space of sections $\Gamma(X, \mathcal{H})$ up to a phase.
 In a small neighborhood of a fixed point, 
 a typical local section of $\mathcal{H}$ represents a pair of physical states $(\psi, \Theta \psi)$ 
 consisting of an electronic state $\psi$ and its mirror image $\Theta \psi$ under the time reversal symmetry.

 Time reversal symmetry defines a pullback on the space of sections induced by $\Theta$ and $\tau$,  
 $$
  \sigma : \Gamma(X, \mathcal{H}) \rightarrow \Gamma(X, \mathcal{H}); \quad s \mapsto - \Theta \circ s \circ \tau \quad \text{with} \quad \sigma^2 = -1
 $$
 So one can compare two sections $s$ and $\sigma s$ since they have the same domain. 
 We have to point out that in general $s$ and $\Theta s$ have different domains.
 
 \begin{defn}
    The transition function of the Hilbert bundle $\mathcal{H}$ is defined by 
 \begin{equation}\label{TransFunc}
   w : X \rightarrow U(2); \quad w(x) := \langle s(x), \sigma t(x)\rangle 
 \end{equation}
 where $s, t \in \Gamma(X, \mathcal{H})$ are local sections. 
 \end{defn} 
 In fact, the transition function is first defined over the overlaps of coordinate patches, then extended to the whole momentum space.
 By the $\mathbb{Z}_2$-CW complex structure of the involutive space $(X, \tau)$, the transition function $w$ is entirely determined by
 the behavior around the fixed points $X^\tau$. Therefore, we can assume that the Hilbert bundle $\pi: \mathcal{H} \rightarrow X$ is a 
 locally trivial bundle with possible degeneracy only at $X^\tau$. In other words, $\mathcal{H}$ 
 is obtained from the trivial bundle $X \times \mathbb{C}^2$ by twisting the fibers around some fixed points, 
 and such twists are kept track of by the transition function $w$. Under this assumption, $\psi$ and $\Theta \psi$ can be viewed as
 global states with possible intersections only at some fixed points.
 
 The transition function of the Hilbert bundle has an important property derived from 
 $\Theta^2 = -1$ \cite{KLW15}, 
 $$
 w(x)^T = -w(\tau(x)), \quad  x \in X
 $$ where $T$ stands for the transpose of a matrix. 
 In particular, it becomes skew-symmetric at a fixed point, 
 $$
 w(x)^T = -w(x), \quad  x \in X^\tau
 $$
 Recall that if one takes the Pfaffian function of a $2 \times 2$ skew-symmetric matrix,
 then one picks up the right-upper corner entry of the matrix. The Pfaffian function suggests a similar operation in the geometric picture.

 \begin{defn}\label{Pfdef}
  The Pfaffian line bundle $\pi: (Pf, \Theta) \rightarrow X$, or simply $ \pi: Pf \rightarrow X$, is defined by half of the Hilbert bundle, 
  that is, one takes a chiral state   $\psi$ or $\Theta \psi$ from the pair $(\psi, \Theta \psi)$. The transition function of the Pfaffian line bundle is
  given by the Pfaffian of $w$, i.e., 
  \begin{equation}
      h : = pf(w) : X \rightarrow U(1)
  \end{equation}
  where $h$ is first defined over $X^\tau$, then extended to $X$ trivially.
 \end{defn}
 So the Pfaffian line bundle $\pi: Pf \rightarrow X$ can be obtained from the trivial bundle $X \times \mathbb{C}$ by multiplying a phase at the fixed points.
 If one combines the Pfaffian line bundle   with the time reversal symmetry, 
 one easily recovers  the Hilbert bundle. Roughly speaking, a chiral state together with the time reversal symmetry determine the pair of chiral states, and vice versa,
 $$
  (\psi, \Theta ) \Longleftrightarrow (\psi, \Theta \psi) \quad \text{or} \quad (\Theta \psi, \Theta ) \Longleftrightarrow (\psi, \Theta \psi)
 $$
 In other words, keeping time reversal symmetry in mind, 
 the Pfaffian line bundle is a reduction of the Hilbert bundle. 
 We will see later that the structure group of the Pfaffian line bundle will be further reduced from $U(1)$ to $\mathbb{Z}_2$,
 so that $\pi: Pf \rightarrow X$ can be compared to the M{\"o}bius strip. 
 If one wants to emphasize the $\mathbb{Z}_2$ structure group and its relation to orientability,
 then one can employ the canonical principal $\mathbb{Z}_2$-bundle, i.e., the associated orientation bundle of the Pfaffian line bundle.
 
 This definition of the Pfaffian line bundle is slightly different from the classical construction of a Pfaffian line bundle. 
 The main difference is that according to the classical construction a Pfaffian line bundle is a real line bundle,
 but in our case $\pi: Pf \rightarrow X$ is a complex line bundle.
 However, the time reversal operator $\Theta$ is a general real structure such that $\Theta^2 = -1$, 
 so $\pi: Pf \rightarrow X$ is a \emph{real} line bundle with respect to $\Theta$. 
 Thus, the key to  understanding 
 the topological band theory of a topological insulator is that the Pfaffian line bundle $\pi: Pf \rightarrow X$ is viewed as a \emph{real} line bundle 
 with the structure group $\mathbb{Z}_2$.

 \subsection{One dimensional case}
 In this subsection, we will study the Hilbert bundle and the Pfaffian line bundle over the circle,
   and interpret the topological $\mathbb{Z}_2$ invariant as an obstruction to orientability by comparing the Pfaffian line bundle to the M{\"o}bius strip. 
 
 For the one dimensional momentum space $X = \mathbb{T} = \mathbb{S}^1$, one has the skeleton decomposition of $(\mathbb{T}, \tau)$ as, 
 \begin{equation} \label{1dCW}
     \mathbb{T} = \mathbb{T}^\tau \sqcup (D^1 \times \mathbb{Z}_2) = \{ 0, \pi \} \sqcup ((0, \pi) \cup (-\pi , 0))
 \end{equation}
 provided that $\mathbb{T}$ is parametrized by the angle.
 So the fixed points $\mathbb{T}^\tau = \{ 0, \pi \}$ divide the circle into two open regions: $R_I = (0, \pi)$ and $R_{II} = \tau(0, \pi) = (-\pi, 0)$.
 Notice that the closure of the first region $\overline{R}_I = [0, \pi]$ is also called the effective Brillouin zone, and  the whole 
 Brillouin zone is $\mathbb{T}$ in 1d. 
 
 Let us work out the transition function of the Hilbert bundle $\pi: \mathcal{H} \rightarrow \mathbb{T}$. 
 Pick a local section $ (\phi, \chi) \in \Gamma(\mathbb{T}, \mathcal{H})$ in a small neighborhood $O_0$ of the fixed point $0$,
 we restrict them to the open region $R_I$ (resp. $R_{II}$) and denote them by $(\phi_I, \chi_I)$ (resp.  $(\phi_{II}, \chi_{II})$). 
 By time reversal symmetry, $\chi_I$ and the pullback $\sigma \phi_{II}$ are the same up to a phase,
 so one has the relation, 
 \begin{equation} 
    \chi_I(x) = e^{i \beta(x)}\sigma \phi_{II}(x), \quad x \in R_I
 \end{equation}
 where $\beta: \mathbb{T} \rightarrow \mathbb{R}$ is assumed to be a continuous phase function. 
 Physically, $\beta$ is a gauge transformation that satisfies the constraint \eqref{betaCond} in the appendix \ref{BerPh}.
 Similarly, $\phi_{I}$ and the pullback $\sigma \chi_{II}$ are the same up to a phase,
 \begin{equation}
    \phi_I(x) = -e^{i \beta(\tau(x))} \sigma \chi_{II}(x), \quad x \in R_I
 \end{equation}
  There exists  a negative sign in front of the phase term because of $\sigma^2 = -1$.
 One has the same relations in the region $R_{II}$,
 $$
  \chi_{II}(x) = e^{i \beta(x)}\sigma \phi_{I}(x), \quad \phi_{II}(x) = -e^{i \beta(\tau(x))} \sigma \chi_{I}(x),  \quad x \in R_{II} 
 $$
 
 \begin{lemma}
  The transition function $w: \mathbb{T} \rightarrow U(2)$ of the Hilbert bundle $\pi: \mathcal{H} \rightarrow \mathbb{T}$ is nonzero only
   at the fixed points $\mathbb{T}^\tau = \{ 0, \pi \}$.  Furthermore, its value at the fixed points are determined by the real continuous function 
   $\beta: \mathbb{T} \rightarrow \mathbb{R}$, more precisely,
    \begin{equation}
  w(0) = \begin{pmatrix}
                                0 & -e^{i \beta(0)} \\
                                e^{i \beta(0)} & 0
                              \end{pmatrix}, \quad
  w(\pi) = \begin{pmatrix}
                                0 & -e^{i \beta(\pi)} \\
                                e^{i \beta(\pi)} & 0
                              \end{pmatrix}
 \end{equation}
 \end{lemma}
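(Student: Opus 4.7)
The plan is to combine the standing assumption that $\mathcal{H}$ is locally trivial away from $\mathbb{T}^\tau$ with the explicit gauge identifications given just above the lemma, and then read off $w$ at the two fixed points by careful limit arguments.

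First I would address the claim that $w$ vanishes off $\mathbb{T}^\tau$. Because the previous paragraph arranges that $\mathcal{H}$ is obtained from the trivial bundle $\mathbb{T}\times\mathbb{C}^2$ by twisting fibers only over $\{0,\pi\}$, we may choose local trivializations whose overlaps on $R_I \cup R_{II}$ are identifications of the constant frames, so in formula \eqref{TransFunc} the sections $s$ and $\sigma t$ differ by the identity and $w\equiv 0$ on $\mathbb{T}\setminus\mathbb{T}^\tau$. This reduces the problem to computing $w(0)$ and $w(\pi)$.

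For the fixed point $0$, I would choose the distinguished local section $(\phi,\chi)\in\Gamma(O_0,\mathcal{H})$ and take the one-sided limits $x\to 0^{\pm}$ of the four identities
\begin{equation*}
\chi_I(x) = e^{i\beta(x)}\sigma\phi_{II}(x),\qquad \phi_I(x) = -e^{i\beta(\tau(x))}\sigma\chi_{II}(x),
\end{equation*}
together with their counterparts over $R_{II}$. Since $\tau(0) = 0$, continuity forces the limiting relations
$\chi(0) = -e^{i\beta(0)}\Theta\phi(0)$ and $\sigma\chi(0) = -e^{i\beta(0)}\phi(0)$,
where the second comes from applying $\sigma=-\Theta\circ(\cdot)\circ\tau$ and using $\Theta^2=-1$ together with the antilinearity of $\Theta$. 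Substituting into $w(0)_{ij} = \langle s_i(0), \sigma s_j(0)\rangle$ with $s_1=\phi$, $s_2=\chi$, the diagonal entries are killed by the orthogonality $\langle \psi,\Theta\psi\rangle=0$ (a consequence of $\Theta^2=-1$), while the off-diagonal entries pick up the phase $\pm e^{i\beta(0)}$ from the antilinear $\Theta$ acting in the appropriate slot of the Hermitian product. Repeating the same computation in a neighborhood $O_\pi$ of $\pi$ gives the second matrix.

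The main obstacle is purely bookkeeping: two minus signs propagate through the computation (one from the definition $\sigma=-\Theta\circ(\cdot)\circ\tau$, one from $\Theta^2=-1$), and a complex conjugation enters whenever $\Theta$ is pulled across a scalar. One must verify that these combine to give exactly the antisymmetric off-diagonal pattern with \emph{unconjugated} phase $e^{i\beta(0)}$ in the upper-right corner, and simultaneously cancel the diagonal entries. Once the conventions for the inner product's conjugate-linear slot and for $\Theta$'s action on scalars are fixed in accordance with the earlier definitions, the verification is straightforward.
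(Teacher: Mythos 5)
Your computation of $w(0)$ and $w(\pi)$ follows essentially the same route as the paper: take one-sided limits of the gauge relations $\chi_I = e^{i\beta}\sigma\phi_{II}$ and $\phi_I = -e^{i\beta\circ\tau}\sigma\chi_{II}$ at the fixed points, kill the diagonal by the Kramers orthogonality $\langle\psi,\Theta\psi\rangle=0$, and read off the off-diagonal phases. One slip there: from $\phi(0) = -e^{i\beta(0)}\sigma\chi(0)$ the limiting relation is $\sigma\chi(0) = -e^{-i\beta(0)}\phi(0)$, not $-e^{+i\beta(0)}\phi(0)$ as you wrote; with the paper's convention that the Hermitian product is conjugate-linear in the second slot, it is precisely this conjugated phase that produces $w_{12}(0)=\langle\phi(0),-e^{-i\beta(0)}\phi(0)\rangle=-e^{i\beta(0)}$. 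You flag the bookkeeping as something to be verified, so this is a correctable slip rather than a structural problem, but as written it would propagate to a conjugated phase in the final matrices.

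The genuine gap is in the first claim, that $w$ vanishes on $\mathbb{T}\setminus\mathbb{T}^\tau$. Your argument --- that on $R_I\cup R_{II}$ one can choose trivializations so that ``$s$ and $\sigma t$ differ by the identity and $w\equiv 0$'' --- is a non sequitur: if $s$ and $\sigma t$ agreed up to the identity, then $\langle s,\sigma t\rangle$ would be $\|s\|^2=1$, not $0$. Triviality of the bundle away from the fixed points does not by itself force the pairing \eqref{TransFunc} to vanish. The paper's actual reason is different and uses the structure of $\sigma$: the diagonal entries vanish everywhere by $\langle\phi,\Theta\phi\rangle=0$, while the off-diagonal entries vanish on the open regions because $\sigma = -\Theta\circ(\cdot)\circ\tau$ precomposes with $\tau$, which interchanges $R_I$ and $R_{II}$; hence $\phi_I$ (defined on $R_I$) and $\sigma\chi_I$ (defined on $R_{II}$) have disjoint domains, and the pairing $\langle\phi_I(x),\sigma\chi_I(x)\rangle$ is only meaningful in the limit as $x$ approaches a fixed point. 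You should replace your trivialization argument with this domain/orthogonality argument; without it the first half of the lemma is not established.
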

  
 \begin{proof} 
 For local sections $s, t \in \Gamma(\mathbb{T}, \mathcal{H})$,  the transition function  is defined by 
 $$
 w : \mathbb{T}\rightarrow U(2); \quad w(x) = \langle s(x), \sigma t(x)\rangle  
 $$
 First let us consider $x \in R_I = (0, \pi)$, so a local basis is given by $(\phi_I, \chi_I)$.
 In the region $R_I$, the transition function is always zero.  
 Since first $\phi_I$ and $\Theta \phi_I$ are orthogonal, i.e., $w_{11}= 0$, and then
 $\phi_I$ and $\sigma \chi_{I}$ have no overlaps in their domains, so $w_{12} = 0$. 
 If we interchange the roles of $\phi_I$ and $\chi_I$, then similarly $w_{22} = 0$ and $w_{21} = 0$. 
 For the same reason, the transition function is also zero for $x \in R_{II}$.
 
 However, in a small neighborhood $O_0$ around  the fixed point $0$, the transition function is nonzero.
 Since $\phi_I$ and $\Theta \phi_I$ are orthogonal, $w_{11} = 0$, similarly $w_{22} = 0$. 
 Let us look at the off-diagonal terms, take the limit $x \rightarrow 0$,   
 $$
  \begin{array}{ll}
   \lim_{x \rightarrow 0} w_{12}(x)   & =  \lim_{x \rightarrow 0} \langle \phi_I(x), \sigma \chi_I (x) \rangle \\
                                       & =  \lim_{x \rightarrow 0} \langle \phi_I(x),   -e^{-i \beta(\tau(x))} \phi_{II}(x) \rangle \\
                                       & =  -e^{i \beta(0)} 
  \end{array}
 $$
 By a similar computation, $ \lim_{x \rightarrow 0} w_{21}(x) = e^{i \beta(0)} $. In sum,
 $$
  \lim_{x \rightarrow 0} w(x) = \begin{pmatrix}
                                0 & -e^{i \beta(0)} \\
                                e^{i \beta(0)} & 0
                              \end{pmatrix}
 $$

 Hence the transition function is skew-symmetric at the fixed point $0$,  
 \begin{equation*}
  w(0) = \begin{pmatrix}
                                0 & -e^{i \beta(0)} \\
                                e^{i \beta(0)} & 0
                              \end{pmatrix}
 \end{equation*}
 Similarly, if we use a local basis $(\phi_{II}, \chi_{II})$ in $R_{II}$ and take the limit $x \rightarrow \pi$, we have
  \begin{equation*}
  w(\pi) = \begin{pmatrix}
                                0 & -e^{i \beta(\pi)} \\
                                e^{i \beta(\pi)} & 0
                              \end{pmatrix}
 \end{equation*}
 
 \end{proof}
 
 \begin{lemma} \label{cor1}
   The Hilbert bundle $\pi: \mathcal{H} \rightarrow \mathbb{T}$ is a locally trivial twisted vector bundle, the twists only happen at 
   the fixed points $\mathbb{T}^\tau = \{ 0, \pi\}$ such that 
     \begin{equation}
  \begin{pmatrix}
    \phi \\
    \chi
  \end{pmatrix} \mapsto 
                          \begin{pmatrix}
   
                                -e^{i \beta(0)} \chi \\
                                e^{i \beta(0)} \phi
                              \end{pmatrix}, \quad
 \begin{pmatrix}
    \phi \\
    \chi
  \end{pmatrix} \mapsto  \begin{pmatrix}
                                -e^{i \beta(\pi)} \chi \\
                                e^{i \beta(\pi)} \phi
                              \end{pmatrix}
 \end{equation}
 where $(\phi, \chi)$ are local sections around   $\mathbb{T}^\tau = \{ 0, \pi \}$.
 \end{lemma}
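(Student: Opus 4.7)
The plan is to view this lemma as a direct unpacking of the preceding lemma: once the transition function $w$ is known to vanish off $\mathbb{T}^\tau$ and to have the explicit skew-symmetric form at the two fixed points, the claim about the twists is a matter of translating the matrix $w$ at $\{0,\pi\}$ into the gluing recipe on local sections.

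First I would record that on the open regions $R_I$ and $R_{II}$ the previous lemma gives $w\equiv 0$, which means that the local trivializations built from $(\phi_I,\chi_I)$ and $(\phi_{II},\chi_{II})$ do not mix across $R_I$ or across $R_{II}$; hence $\mathcal{H}$ is already trivialized over each open arc, and any nontriviality of $\mathcal{H}\to\mathbb{T}$ must be concentrated at the fixed points $0$ and $\pi$. This step is just an observation: it upgrades ``$w=0$ on the open regions'' to ``locally trivial on the open regions.''

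Next I would take small neighborhoods $O_0$ and $O_\pi$ of the two fixed points, where two different local frames meet (those coming from $R_I$ and from $R_{II}$). By the construction of $w$ as $w(x)=\langle s(x),\sigma t(x)\rangle$ and the computation at the fixed points in the previous lemma, the gluing of a column vector $(\phi,\chi)^T$ from one side to the other is given by multiplication with the matrix $w(0)$ (respectively $w(\pi)$). Applying
\[
w(0)\begin{pmatrix}\phi\\\chi\end{pmatrix}=\begin{pmatrix}0 & -e^{i\beta(0)}\\ e^{i\beta(0)} & 0\end{pmatrix}\begin{pmatrix}\phi\\\chi\end{pmatrix}=\begin{pmatrix}-e^{i\beta(0)}\chi\\ e^{i\beta(0)}\phi\end{pmatrix},
\]
and the analogous computation at $\pi$, immediately yields the stated twist formulas.

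The only genuine thing to check, and what I would regard as the main (though mild) obstacle, is the identification of the matrix $w$ with the gluing rule on the column $(\phi,\chi)^T$: one has to be careful that the ordering of $\phi,\chi$ in the frame and the antiunitary insertion $\sigma$ in the definition of $w$ combine with the $\sigma^2=-1$ sign so that the off-diagonal entries $-e^{i\beta(0)}$ and $e^{i\beta(0)}$ appear in the asserted positions rather than swapped. Once this sign bookkeeping is matched against the relations $\chi_I=e^{i\beta}\sigma\phi_{II}$ and $\phi_I=-e^{i\beta\circ\tau}\sigma\chi_{II}$ derived earlier, the lemma follows with no further computation.
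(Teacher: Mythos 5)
Your proposal is correct and follows the same route as the paper, whose entire proof of this lemma is the single sentence that it is implied by the preceding lemma on the transition function $w$. Your elaboration --- vanishing of $w$ on the open arcs giving local triviality there, and the matrix multiplication by $w(0)$ and $w(\pi)$ producing the stated twist formulas --- is exactly the unpacking the paper leaves implicit.
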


 \begin{proof}
    This is implied from the above lemma.
 \end{proof}

 Let us look into the real phase function $\beta:  \mathbb{T} \rightarrow \mathbb{R}$ closely.
 Recall the  constraint \eqref{betaCond} on the gauge transformation $\beta$ is 
 $\beta(2\pi ) - \beta(0) = 2k \pi$  for some $k \in \mathbb{Z}$, so that 
 the Berry phase is gauge invariant. Taking time reversal symmetry into account,
 we get a refined constraint on $\beta$.

 \begin{lemma}
  Under the time reversal symmetry, the continuous real phase function $\beta: \mathbb{T} \rightarrow \mathbb{R}$ satisfies the 
  refined condition, 
  $$
  \beta(\pi) - \beta(0) = k\pi, \quad k \in \mathbb{Z}
  $$
 \end{lemma}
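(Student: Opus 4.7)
The plan is to exploit the $\tau$-covariance of the gauge function $\beta$ to halve the Berry-phase winding coming from \eqref{betaCond}. Since $R_I$ and $R_{II}$ are swapped by $\tau$, time reversal should force the winding contributions over the two halves to match, so each equals $k\pi$.

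First I would extract a $\tau$-symmetry of $\beta$ from the defining relations $\chi_I(x) = e^{i\beta(x)}\sigma\phi_{II}(x)$ and $\phi_I(x) = -e^{i\beta(\tau(x))}\sigma\chi_{II}(x)$ on $R_I$, together with their counterparts on $R_{II}$. Applying the antilinear operator $\sigma$ to the first of these and invoking $\sigma^2 = -1$, one produces an identity whose comparison with the companion formula for $\phi_{II}$ pins down $\beta\circ\tau$ in terms of $\beta$. The expected form, fixed by evaluation at the fixed point $0$, is
\begin{equation*}
\beta(\tau(x)) + \beta(x) \;\equiv\; 2\beta(0) \pmod{2\pi}.
\end{equation*}

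Second, I would specialize this relation to $x=\pi$. Since $\tau(\pi) = -\pi$ and the lift of $\beta$ to $\mathbb{R}$ satisfies $\beta(-\pi) = \beta(\pi) - 2k\pi$ by the Berry-phase constraint \eqref{betaCond}, the symmetry above becomes
\begin{equation*}
2\beta(\pi) - 2k\pi \;=\; 2\beta(0),
\end{equation*}
which immediately yields $\beta(\pi) - \beta(0) = k\pi$. Equivalently, one can phrase the second step as a change of variables $y = \tau(x) = 2\pi - x$ identifying the winding integral $\beta(2\pi) - \beta(\pi)$ over $R_{II}$ with $\beta(\pi) - \beta(0)$ over $R_I$, then adding to the total $2k\pi$.

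The principal obstacle is the first step: extracting the correct $\tau$-symmetry of $\beta$ requires careful bookkeeping of the antilinearity of $\sigma$ and of the minus sign coming from $\sigma^2 = -1$, and one must check that the additive ``$2\beta(0)$'' constant is fixed by continuity across $x=0$ rather than being free. Once that symmetry is in hand, the evaluation at the other fixed point $\pi$ together with the global winding condition is purely arithmetic.
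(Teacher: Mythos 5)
Your overall strategy --- exploit a $\tau$-symmetry of $\beta$ and evaluate at the fixed points --- is the same one the paper uses, and your second step is sound arithmetic once the symmetry is granted. But your first step has a genuine gap. You propose to \emph{derive} the relation $\beta(\tau(x)) + \beta(x) \equiv 2\beta(0) \pmod{2\pi}$ from the section relations $\chi_I = e^{i\beta}\sigma\phi_{II}$ and $\phi_I = -e^{i\beta\circ\tau}\sigma\chi_{II}$ by applying $\sigma$ and using $\sigma^2=-1$. That computation does not produce a constraint on $\beta$: writing $\sigma = -\Theta\circ(\cdot)\circ\tau$ and using antilinearity, applying $\sigma$ to $\chi_{II} = e^{i\beta}\sigma\phi_I$ gives exactly $\phi_I = -e^{i\beta\circ\tau}\sigma\chi_{II}$ back again, i.e.\ the two displayed relations are $\sigma$-images of one another and are mutually consistent for an \emph{arbitrary} continuous phase function $\beta$. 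Continuity across $x=0$ likewise yields only a tautology ($\beta(0)+\beta(0)=2\beta(0)$). So no amount of bookkeeping of signs will make the symmetry fall out of those formulas; it has to be imposed.

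That is in fact what the paper does: it postulates the compatibility $\beta\circ\tau = \tau\circ\beta$ of the gauge transformation with time reversal, i.e.\ $e^{i\beta(\tau(x))} = e^{-i\beta(x)}$, evaluates this at the two fixed points to get $e^{2i\beta(0)} = e^{2i\beta(\pi)} = 1$, and concludes $2\beta(\pi)-2\beta(0) = 2k\pi$ directly --- no appeal to the winding condition \eqref{betaCond} or to the lift $\beta(-\pi)=\beta(\pi)-2k\pi$ is needed. Note also that the paper's equivariance is strictly stronger than your posited relation (it forces $\beta(0),\beta(\pi)\in\pi\mathbb{Z}$ individually, whereas yours constrains only the difference); the two agree modulo $2\pi$ only because the paper's version makes $2\beta(0)$ a multiple of $2\pi$. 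To repair your argument, replace the attempted derivation by the explicit assumption of $\tau$-equivariance of $\beta$, after which your evaluation at $x=\pi$ (or, more simply, at both fixed points separately) finishes the proof.
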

 \begin{proof}
   Time reversal symmetry defines the time reversal transformation $\tau$ both on $\mathbb{T}$ and $\mathbb{R}$, and the compatibility
   requires $\beta \circ \tau = \tau \circ \beta$, i.e.,
    $$
      e^{i\beta(\tau(x))} = e^{-i\beta(x)}, \quad x \in \mathbb{T}
    $$
    where $\tau: \mathbb{R} \rightarrow \mathbb{R};\,\, t \mapsto -t$ is written explicitly. 
    Now look at the fixed points,
    $$
      e^{i\beta( 0)} = e^{-i\beta(0)}, \quad  e^{i\beta(\pi)} = e^{-i\beta(\pi)}
    $$
    combine these two equations, we have
    $$
     e^{2i \beta(\pi) } =  e^{2i\beta( 0)} \quad  \Longleftrightarrow \quad 2 \beta(\pi) - 2\beta( 0) = 2 k \pi
    $$
 \end{proof}

 For example, the linear function $\beta(x) = kx$ satisfies this condition.

\begin{examp}\label{ex1}
   If the phase function is the identity map, i.e., $ \beta(x) = x $, 
 then
 $$
 w(0) = \begin{pmatrix}
                                0 & -1 \\
                                1 & 0
                              \end{pmatrix}, \quad 
 w(\pi) = \begin{pmatrix}
                                0 & 1 \\
                                -1 & 0
                              \end{pmatrix}                             
 $$

 \begin{figure}\label{fig1}
   \centering
  \includegraphics[scale = 0.4]{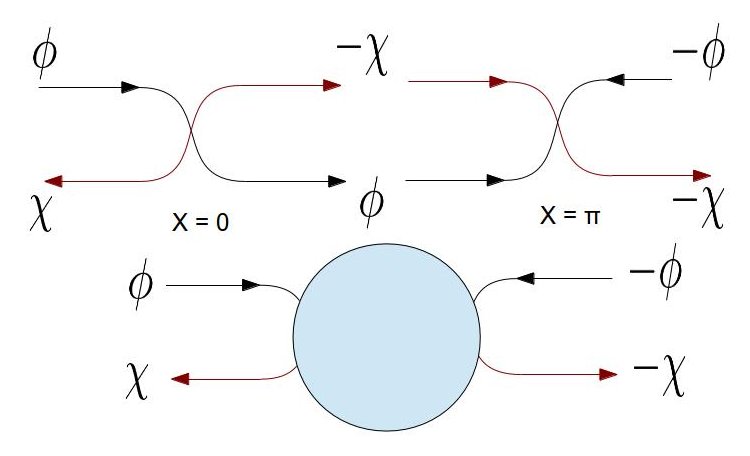}
  \caption{ Schematic illustration of Example \ref{ex1}, after an adiabatic process, i.e. evolving through the fixed points $\{ 0, \pi \}$, 
  the chiral states both change their orientations according to the transition function $w$. }
 \end{figure}
 
 By lemma \ref{cor1}, a local section $(\phi, \chi)$ is transformed at the fixed points by,
 $$ \text{at x = 0: } \,\,
 \begin{pmatrix}
    \phi \\
    \chi
  \end{pmatrix} \mapsto 
                          \begin{pmatrix}
   
                                - \chi \\
                                 \phi
                              \end{pmatrix}, \quad
                              \text{at x = $\pi$: } \,\,
 \begin{pmatrix}
    \phi \\
    -\chi
  \end{pmatrix} \mapsto  \begin{pmatrix}
                                -\chi \\
                                 -\phi
                              \end{pmatrix}
 $$
 which is illustrated in Figure \ref{fig1}.
 Notice that at the fixed point $\pi$, we changed the order of the local sections from $( -\chi, \phi)$ to $(\phi, -\chi)$
 since we used a local basis $(\phi_{II}, \chi_{II})$ to compute the transition function at $\pi$, i.e., $w(\pi)$.
 
 In this case, a global section $(\psi, \Theta \psi)$ of the Hilbert bundle $\pi: \mathcal{H}\rightarrow \mathbb{T}$ 
 is transformed to $(- \Theta \psi, -\psi)$ along the circle.
 So both chiral states, i.e., $\psi$ and $\Theta \psi$, change their orientations after traveling along the circle once.
 
 Hence, there exists an effective conical intersection made up of global sections $(\psi, \Theta\psi)$ and $(- \Theta\psi, -\psi)$.
 As a remark,  $\psi$ and $\Theta\psi$ intrinsically have opposite chirality since they are mirror image of each other under the time reversal symmetry. 
 The swap $(\psi, \Theta\psi) \mapsto (- \Theta\psi, -\psi)$ is viewed as a twist between the chiral states $\psi$ and $\Theta \psi$ in a 2d fiber, 
 which is more complicated than the 1d twist $t \mapsto -t$ in the M{\"o}bius strip.

    It is easy to compute the Kane--Mele invariant using the definition \eqref{KMinv},
     $$
     \nu = pf(w(0)) \times pf(w(\pi)) = (-1) \times 1 = -1
     $$
      As a consequence, the phase function in this example gives rise to a non-trivial topological insulator.
\end{examp}

 \begin{examp}\label{ex2}
    If the phase function is defined by $ \beta(x) = 2x$,
 then 
 $$
 w(0) = \begin{pmatrix}
                                0 & -1 \\
                                1 & 0
                              \end{pmatrix}, \quad 
 w(\pi) = \begin{pmatrix}
                                0 & -1 \\
                                1 & 0
                              \end{pmatrix}                             
 $$
 
 \begin{figure}
   \centering
  \includegraphics[scale = 0.4]{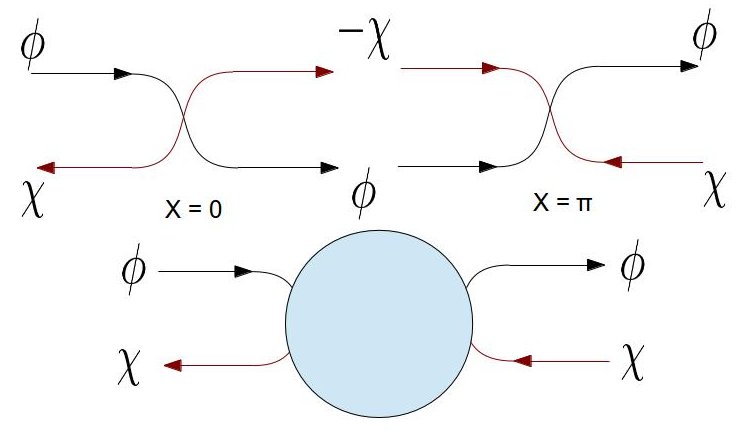}
  \caption{  Schematic illustration of Example \ref{ex2}, after an adiabatic process,   
  the chiral states retain their orientations. }
 \end{figure}
 
 Now a local section $(\phi, \chi)$ is transformed at the fixed points by,
 $$ \text{at x = 0: } \,\,
 \begin{pmatrix}
    \phi \\
    \chi
  \end{pmatrix} \mapsto 
                          \begin{pmatrix}
   
                                - \chi \\
                                 \phi
                              \end{pmatrix}, \quad
                              \text{at x = $\pi$: } \,\,
 \begin{pmatrix}
    \phi \\
    -\chi
  \end{pmatrix} \mapsto  \begin{pmatrix}
                                 \chi \\
                                  \phi
                              \end{pmatrix}
 $$
 In this case, a global section $(\psi, \Theta \psi)$ of the Hilbert bundle $\pi: \mathcal{H}\rightarrow \mathbb{T}$ 
 is transformed to $(\psi,  \Theta \psi)$ along the circle.
 So there is no conical intersection between the chiral states. 
 More precisely, at $x = 0$, one chiral state is twisted, then at $x = \pi$, the same chiral state is untwisted,
 so the effective result is no twist. 
  The Kane--Mele invariant for this phase function is 
     $$
     \nu = pf(w(0)) \times pf(w(\pi)) = (-1) \times (-1) = 1
     $$
 \end{examp}

 Since the transition function $w$ of the Hilbert bundle is skew-symmetric at the fixed points, 
 it only changes the orientation of one chiral state at one fixed point.
 In other words, since the rank of the Hilbert bundle is 2, we need two fixed points to check 
 whether the orientations of a pair of chiral states are switched or not in an adiabatic process.
 
 Because of the time reversal symmetry, it is enough to consider the effective Brillouin torus instead of the whole Brillouin torus $\mathbb{T}^d$.
 Accordingly, we define the half winding number of the gauge transformation on  the effective Brillouin torus $[0, \pi]$ in 1d.
 \begin{defn}
    The half winding number of the phase function $\beta(x): \mathbb{T} \rightarrow \mathbb{R}$, denoted by $n(\beta)$, is defined by,
    $$
     n(\beta) \,: = \frac{\beta(\pi) - \beta(0)}{\pi} = k \in \mathbb{Z}
    $$
 \end{defn}

  As a generalization of the above examples, we have the following lemma based on the refined constraint on $\beta$.
  \begin{lemma}
    The half winding number of $\beta$ determines the topological $\mathbb{Z}_2$ invariant.
     \end{lemma}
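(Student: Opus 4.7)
The plan is to reduce the Kane--Mele product at the time-reversal fixed points to a sign governed purely by the parity of $n(\beta)$, using the explicit formulas for $w(0)$ and $w(\pi)$ obtained in the preceding lemma.

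First I would read off the Pfaffians at the two fixed points. For a $2\times 2$ skew-symmetric matrix the Pfaffian is the upper-right entry, so
$$
pf(w(0)) \;=\; -e^{i\beta(0)}, \qquad pf(w(\pi)) \;=\; -e^{i\beta(\pi)}.
$$
Next I would invoke the compatibility identity $e^{i\beta(\tau(x))} = e^{-i\beta(x)}$ established in the previous lemma: at the fixed points this forces $e^{2i\beta(0)} = e^{2i\beta(\pi)} = 1$, so each of $e^{i\beta(0)}$ and $e^{i\beta(\pi)}$ lies in $\{\pm 1\}$. Writing $e^{i\beta(0)} = (-1)^m$, the refined constraint $\beta(\pi) - \beta(0) = k\pi$ with $k = n(\beta)$ then gives $e^{i\beta(\pi)} = (-1)^{m+k}$.

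Finally, the Kane--Mele invariant \eqref{KMinv} in the $1$-d case reduces to $\nu = pf(w(0)) \cdot pf(w(\pi))$, as already used in Examples~\ref{ex1} and~\ref{ex2}. Substituting the two Pfaffians yields
$$
\nu \;=\; (-e^{i\beta(0)})(-e^{i\beta(\pi)}) \;=\; (-1)^{2m+k} \;=\; (-1)^{n(\beta)}.
$$
Hence $\nu = +1$ exactly when $n(\beta)$ is even (trivial insulator) and $\nu = -1$ exactly when $n(\beta)$ is odd (non-trivial insulator); Examples~\ref{ex1} ($k=1$) and~\ref{ex2} ($k=2$) are the two minimal instances.

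The only real obstacle is a bookkeeping one: fixing the sign convention for the Pfaffian of a $2\times 2$ skew-symmetric matrix to match that of the definition \eqref{KMinv}, and verifying that the refined time-reversal constraint genuinely reduces the a priori $U(1)$-valued phase $e^{i\beta(x_i)}$ at fixed points to a $\mathbb{Z}_2$-valued sign, so that the product $\nu$ indeed lands in $\{\pm 1\}$. Once those conventions are fixed, the lemma is an immediate substitution.
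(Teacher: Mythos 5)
Your proof is correct and follows essentially the same route as the paper: substitute the explicit fixed-point values $pf(w(0))=-e^{i\beta(0)}$, $pf(w(\pi))=-e^{i\beta(\pi)}$ and apply the refined constraint $\beta(\pi)-\beta(0)=k\pi$. The only cosmetic difference is that the paper computes the \emph{ratio} $pf(w(\pi))/pf(w(0))=e^{i(\beta(\pi)-\beta(0))}$ (the ``change of signs'' interpretation), whereas you compute the product and then use $e^{2i\beta(0)}=1$ to reach the same $(-1)^{n(\beta)}$; the two agree precisely because $pf(w(0))^2=1$.
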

 \begin{proof}
 As mentioned in the appendix \ref{KMinv}, it is better to interpret the Kane--Mele invariant as the change of signs of Pfaffians,
 so the 1d Kane--Mele invariant is easily computed as
 $$
  \nu = \frac{ pf(w(\pi))} { pf(w(0)) } =  e^{i \beta(\pi)- i \beta(0)}  = e^{ik\pi} = (-1)^{n(\beta)}
 $$
  More precisely, for odd $k = 2n+1$, the Kane--Mele invariant $\nu(2n+1) = -1$,
    and for even $k = 2n$, the Kane--Mele invariant $\nu(2n) = +1$.   
 \end{proof}

There is an ambiguity in choosing a pair of global sections of the Hilbert bundle  as $(\psi, \Theta \psi)$ or $(\Theta \psi, \psi)$,  
so the induced Pfaffian line bundle could have $\psi$ or $\Theta \psi$ as its global section. 
In order to remove this ambiguity, one has to reduce the structure group of the Pfaffian line bundle from $U(1)$ to $\mathbb{Z}_2$.
As a result, the Kane--Mele invariant is independent of the choice of a global section, 
so  the effective physical  theory is always well-defined.

 Recall that the structure group of the Pfaffian line bundle $Pf \rightarrow \mathbb{T}$ defined in \eqref{Pfdef} is derived from that of the Hilbert bundle $\mathcal{H} \rightarrow \mathbb{T}$.
Since there is only one chiral state present in $Pf \rightarrow \mathbb{T}$ and the transition function $w$ of the Hilbert bundle is skew-symmetric, taking the Pfaffian function of 
$w$ gives us the transition function of $Pf \rightarrow \mathbb{T}$, denoted by $h = pf(w): \mathbb{T} \rightarrow U(1)$.
There are basically two different ways to reduce the structure group of  $Pf \rightarrow \mathbb{T}$ from $U(1)$ to $\mathbb{Z}_2$.

One way is to fix the initial condition of the phase function $\beta: \mathbb{T} \rightarrow \mathbb{R}$ by setting $\beta(0) = 0$ so that 
$\beta(\pi) = k\pi$. In this case, the off-diagonal matrices
$$
w(0) = \begin{pmatrix}
        0 & -1 \\
        1 & 0
       \end{pmatrix}, \quad 
w(\pi) = \begin{pmatrix}
          0 & (-1)^{k+1} \\
          (-1)^k & 0
         \end{pmatrix}
$$
give a representation of $\mathbb{Z}_2$, that is, the structure group of the Hilbert bundle is reduced from $U(2)$ to $\mathbb{Z}_2$.
At the same time, the structure group of the Pfaffian line bundle is reduced from $U(1)$ to $\mathbb{Z}_2$ since 
$$
h(0) = pf(w(0)) = -1, \quad h(\pi) = pf(w(\pi)) = (-1)^{k+1}
$$
To get the same result as above, an equivalent way is to normalize the transition functions by $e^{i\beta(0)}$, i.e., 
$$
w' = \frac{w}{e^{i\beta(0)}}, \quad h' = \frac{h}{e^{i\beta(0)}}
$$
Another commonly used way is to take the sign of Pfaffians, which is suggested by the Kane--Mele invariant,
$$
h: X^\tau \rightarrow \mathbb{Z}_2, \quad h(x) : = sgn(pf(w(x))) = \frac{pf(w(x))}{\sqrt{\det(w(x))}}  
$$
After such reduction, from now on we assume the structure group of the Pfaffian line bundle $\pi: Pf \rightarrow \mathbb{T}$ is $\mathbb{Z}_2$.
There exists a canonical principal $\mathbb{Z}_2$-bundle, i.e., the orientation bundle, denoted by $\pi: Or \rightarrow \mathbb{T}$, 
 for an open subset $U \subset \mathbb{T}$, $Or|_U \cong U \times \mathbb{Z}_2$ locally.
So the Pfaffian line bundle can be viewed as an associated bundle of the orientation bundle, i.e.,
$$
Or \times_{\mathbb{Z}_2} \mathbb{R} \cong Pf
$$
where we view the Pfaffian line bundle as a \emph{real} line bundle. As a consequence, the orientation bundle and the Pfaffian line bundle can 
be used interchangeably.

Now we can compare the Pfaffian line bundle $Pf \rightarrow \mathbb{T}$ to the M{\"o}bius strip $M\text{\"o} \rightarrow \mathbb{T}$.
The transition function of the M{\"o}bius strip is given  by \eqref{MStrans} in the appendix \ref{Mob}.
For the Pfaffian line bundle $Pf \rightarrow \mathbb{T}$, the transition function $h$ has two possibilities according
to the evenness or oddness of the half winding number of the phase function $\beta$.

\begin{examp}
  Let us continue to discuss Example \ref{ex1}, the Pfaffian line bundle has the transition function $h = pf(w)$. 
  More precisely,
  $$
  h(0) = pf(w(0)) = pf \begin{pmatrix}
                                0 & -1 \\
                                1 & 0
                              \end{pmatrix} = -1,
  $$
  $$
  h(\pi) = pf(w(\pi)) = pf \begin{pmatrix}
                                0 & 1 \\
                                -1 & 0
                              \end{pmatrix} = 1.
  $$
In other words, we have  
  \begin{equation*} 
   h:
\begin{cases}
  0 \rightarrow -1,  & \text{orientation-reversing} \\
  \pi \rightarrow +1, &  \text{orientation-preserving}
\end{cases}
\end{equation*}
which is analogous to the transition function of the M{\"o}bius strip \eqref{MStrans}.
So the effective result of $h$ is that it reverses the orientation of one chiral state that defines the Pfaffian line bundle. 
If the other chiral state was chosen to define the Pfaffian line bundle, by the transition function $w$ being skew-symmetric, 
its orientation will be preserved at $x= 0$ and reversed at $x = \pi$. In sum, the orientations of both chiral states are reversed
under $w$ along the circle in the Hilbert bundle, and this fact can be detected by looking at $h = pf(w)$ of the Pfaffian line bundle 
combined with the time reversal symmetry.

If we view $\pi: Pf \rightarrow \mathbb{T}$ as a ``real'' Pfaffian line bundle, then  its characteristic class is given by
the first Stiefel--Whitney class $w_1 \in H^1(\mathbb{T}, \mathbb{Z}_2)$. In this example, we have $w_1(Pf) = 1$ since 
the orientation of $Pf$ is reversed by the transition function, i.e.,
$h(0)h(\pi) = -1$. In other words, $w_1(Pf) = 1$ characterizes the fact that there exists a conical intersection of chiral states and 
the Kane--Mele invariant  $\nu = -1$.
\end{examp}

\begin{examp}
  Let us continue to discuss Example \ref{ex2}, the Pfaffian line bundle has the transition function $h = pf(w)$ given by,
  $$
  h(0) = pf(w(0)) = pf \begin{pmatrix}
                                0 & -1 \\
                                1 & 0
                              \end{pmatrix} = -1,
  $$
  $$
  h(\pi) = pf(w(\pi)) = pf \begin{pmatrix}
                                0 & -1 \\
                                1 & 0
                              \end{pmatrix} = -1.
  $$
In this case, both of them are orientation-reversing,
  \begin{equation*} 
   h:
\begin{cases}
  0 \rightarrow -1,  & \text{orientation-reversing} \\
  \pi \rightarrow -1, &  \text{orientation-reversing} 
\end{cases}
\end{equation*}
So the effective result of $h$ in this case is to preserve the orientation of one chiral state that defines the Pfaffian line bundle. 
If the other chiral state was chosen to define the Pfaffian line bundle, by the transition function $w$ being skew-symmetric, 
its orientation will be preserved at $x= 0$ and preserved at $x = \pi$. In sum, the orientations of both chiral states are preserved
under $w$ along the circle in the Hilbert bundle.
 In this example, we have $w_1(Pf) = 0$ since the Pfaffian bundle is trivial $Pf \cong \mathbb{T} \times \mathbb{C}$ and 
$h(0)h(\pi) = 1$. In other words, $w_1(Pf) = 0$ tells us that there is no conical intersections  and 
the Kane--Mele invariant $\nu = 1$.
\end{examp} 
  
\begin{prop}
  The first Stiefel--Whitney class of the Pfaffian line bundle $\pi: Pf \rightarrow \mathbb{T}$, i.e., 
  $w_1(Pf) \in H^1(\mathbb{S}^1, \mathbb{Z}_2) \cong \mathbb{Z}_2 $ 
  provides a method to compute the $\mathbb{Z}_2$ invariant of a 1d topological insulator. More precisely,
   its relation with the Kane--Mele invariant is given by 
   \begin{equation}
      \nu = (-1)^{w_1(Pf)}
   \end{equation}
\end{prop}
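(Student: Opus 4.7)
The plan is to reduce the proposition to the classification of real line bundles over the circle and then match the resulting invariant to the Kane--Mele product of Pfaffians that was already computed in the two running examples.

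First, I would note that since the structure group of $Pf \to \mathbb{T}$ has been reduced to $\mathbb{Z}_2$, the bundle is classified up to isomorphism by a map $\mathbb{T} \to B\mathbb{Z}_2$, hence by an element of $H^1(\mathbb{S}^1,\mathbb{Z}_2)\cong\mathbb{Z}_2$, which is precisely $w_1(Pf)$. There are therefore exactly two possibilities: $w_1(Pf)=0$, corresponding to the trivial line bundle $\mathbb{T}\times\mathbb{R}$, and $w_1(Pf)=1$, corresponding to the M\"obius strip. I would invoke the discussion around \eqref{MStrans} to recall that $w_1$ computes, modulo $2$, the number of orientation-reversing transitions encountered while traversing the base loop once.

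Second, I would use the fact established earlier in this subsection that the transition function $h=pf(w)$ of $Pf$ is concentrated at the two fixed points $\mathbb{T}^\tau=\{0,\pi\}$ and takes values in $\{\pm 1\}$ after the reduction of the structure group. Traversing $\mathbb{T}$ once therefore yields exactly two transitions, at $0$ and at $\pi$, each of which is orientation-preserving precisely when $h$ equals $+1$ there and orientation-reversing when $h=-1$. Consequently
\begin{equation*}
(-1)^{w_1(Pf)} \;=\; h(0)\cdot h(\pi) \;=\; pf(w(0))\cdot pf(w(\pi)).
\end{equation*}
The two worked examples already verify both cases of this identity, so it is really just a consistency check of the sign conventions fixed in the previous lemmas.

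Finally, I would invoke the definition \eqref{KMinv} of the Kane--Mele invariant in one dimension, which (as computed in the proof of the lemma relating $\nu$ to the half winding number of $\beta$) gives $\nu=pf(w(0))\,pf(w(\pi))$ after the normalization $\beta(0)=0$, or equivalently the ratio of the Pfaffians at the two fixed points. Combining this with the previous displayed equation yields $\nu=(-1)^{w_1(Pf)}$, as claimed. The only step that requires genuine care is the sign/normalization bookkeeping in passing between $h$, $pf(w)$, and the Kane--Mele invariant; everything else is essentially a direct application of the classification of line bundles over $\mathbb{S}^1$.
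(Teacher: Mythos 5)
Your proposal is correct and follows essentially the same route as the paper's proof: reduce the structure group to $\mathbb{Z}_2$, identify $(-1)^{w_1(Pf)}$ with the product $h(0)h(\pi)=pf(w(0))\,pf(w(\pi))$ of the transition-function values at the two fixed points, and match this with the definition of the Kane--Mele invariant. The only difference is cosmetic: you justify the identity $(-1)^{w_1(Pf)}=h(0)h(\pi)$ by counting orientation-reversing transitions via the classification of real line bundles over $\mathbb{S}^1$, whereas the paper simply asserts it after noting that the product is independent of the choice of global section.
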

\begin{proof}
 First notice that the transition function $w$ of the Hilbert bundle gives rise to a representation of $\mathbb{Z}_2$ at the fixed points, 
 so the product $h(0)h(\pi) = pf(w(0))pf(w(\pi))$ is independent of the choice of a global section of the Pfaffian line bundle.
 Now we view the Pfaffian line bundle $\pi: Pf \rightarrow \mathbb{T}$ as a \emph{real} line bundle with the structure group $\mathbb{Z}_2$,
 so it makes sense to take the first Stiefel--Whitney class. The first Stiefel--Whitney class is the obstruction to orientability, 
 more precisely, it is computed by the product of  $h(x) =pf(w(x))$ for $x \in \mathbb{T}^\tau = \{0, \pi \}$,
 $$
    (-1)^{w_1(Pf)} = h(0)h(\pi) = pf(w(0))pf(w(\pi)) = \nu
 $$
 i.e., the Kane--Mele invariant.
 So the topological $\mathbb{Z}_2$ invariant is interpreted as the first Stiefel--Whitney class of the Pfaffian line bundle.
 In other words, the existence of a conical intersection between two orthogonal chiral states in the Hilbert bundle is translated into the orientability of the 
 Pfaffian line bundle defined by one chiral state with the help of the time reversal symmetry.
\end{proof}

\subsection{Higher dimensions}

In condensed matter physics, one considers topological insulators in spatial dimensions $1$, $2$ and $3$.
In 2d and 3d, the topological $\mathbb{Z}_2$ invariant is again the obstruction to orientability of the Pfaffian line bundle,
which is a direct generalization of the 1d case.

\begin{prop}
   The  $\mathbb{Z}_2$ invariant of a 2d (resp. 3d) topological insulator 
   can be computed by the 2nd (resp. 3rd) Stiefel--Whitney class of the Pfaffian line bundle $\pi: Pf \rightarrow X$ 
   for the momentum space $X = \mathbb{S}^n$ or $\mathbb{T}^n$ ($n = 2, 3$).
 \end{prop}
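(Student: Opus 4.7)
The strategy is to extend the one-dimensional argument dimension by dimension, using the $\mathbb{Z}_2$-CW decomposition of $X$ whose $0$-cells are exactly the fixed points $X^\tau$. For $X=\mathbb{T}^n$ there are $2^n$ time-reversal invariant momenta, and for $X=\mathbb{S}^n$ there are two; in every case the Pfaffian line bundle $Pf\to X$ is twisted only over these $0$-cells, because on each free $\mathbb{Z}_2$-cell one can flow to $X^\tau$ and repeat the local-triviality argument from Lemma~2 of the 1d case. So the transition function $h=pf(w)$ may be regarded as a $\mathbb{Z}_2$-valued function supported on $X^\tau$ after the same normalization (fix $\beta$ at a chosen base point, or take $h(x)=pf(w(x))/\sqrt{\det w(x)}$).

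Next I would compute the Kane--Mele invariant as
\[
\nu \;=\; \prod_{x\in X^\tau} \frac{pf(w(x))}{\sqrt{\det w(x)}}
\]
by directly applying the definition recalled in the appendix. In 2d on $\mathbb{T}^2$ this is a product over four fixed points, on $\mathbb{S}^2$ over two; in 3d the product is over eight (resp.\ two) fixed points. This is the generalization of the 1d product $h(0)h(\pi)=\nu$.

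To identify $\nu$ with $w_n(Pf)$, I would exploit that $Pf$ has structure group $\mathbb{Z}_2$, so it is classified by a map $f:X\to B\mathbb{Z}_2=\mathbb{RP}^\infty$. Since $H^*(B\mathbb{Z}_2;\mathbb{Z}_2)=\mathbb{Z}_2[a]$ with $|a|=1$, every Stiefel--Whitney class of the associated real line bundle is a power of $w_1(Pf)$, namely $w_k(Pf)=f^*(a^k)=w_1(Pf)^{\cup k}$. Then pair with the fundamental class: for $n=2,3$ the group $H^n(X;\mathbb{Z}_2)\cong\mathbb{Z}_2$ (for $X=\mathbb{T}^n$ use Künneth; for $X=\mathbb{S}^n$ it is immediate), and $\langle w_n(Pf),[X]\rangle\in\mathbb{Z}_2$ is the desired invariant. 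The plan is to show this pairing equals the product of $h(x)=pf(w(x))$ over $x\in X^\tau$ by a localization argument: on each top-dimensional $\mathbb{Z}_2$-cell containing a fixed point, the bundle is obtained from the trivial one by a $\mathbb{Z}_2$-twist at that single fixed point, and the local contribution to $w_n$ is precisely $h(x)$ via the same cup-product calculation that identifies $w_1$ with the obstruction in the 1d case (applied to each circle factor for $\mathbb{T}^n$, or to each hemisphere for $\mathbb{S}^n$).

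The main obstacle I expect is the last step: turning the local data at fixed points into a global cohomology class of degree $n$. For $\mathbb{T}^n$ this is clean because Künneth reduces everything to $n$ copies of the 1d result, giving $w_n(Pf)=w_1^{\cup n}$ evaluated as a product of $n$ factors each of which is a 1d Kane--Mele contribution. For $\mathbb{S}^n$ it is subtler, since the obstruction cannot be factored, and one must invoke either a Thom--Gysin argument on the equivariant cell decomposition or a direct comparison with the CW cochain computing $H^n(\mathbb{S}^n;\mathbb{Z}_2)$; in either case the conclusion $\nu=(-1)^{w_n(Pf)}$ follows, establishing the proposition for both families of momentum spaces.
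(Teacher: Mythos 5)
There is a genuine gap at the step where you identify the top class. For a real line bundle the total Stiefel--Whitney class is $1 + w_1$, so $w_k(Pf) = 0$ for all $k > 1$; the classifying map $f: X \to \mathbb{RP}^\infty$ pulls back $1+a$ to $1 + w_1(Pf)$, it does not pull back $a^k$ to $w_k(Pf)$. The cup power $w_1(Pf)^{\cup k}$ is a well-defined class, but it is not the $k$-th Stiefel--Whitney class of $Pf$, and worse, it vanishes identically in every case the proposition covers: on $\mathbb{S}^n$ ($n \geq 2$) because $H^1(\mathbb{S}^n, \mathbb{Z}_2) = 0$, and on $\mathbb{T}^n$ because $H^*(\mathbb{T}^n, \mathbb{Z}_2)$ is an exterior algebra on degree-one generators, so the square of any degree-one class is zero mod $2$. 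Your K\"unneth reduction therefore forces $\nu = +1$ for every 2d and 3d insulator, contradicting the existence of nontrivial phases; the difficulty you flag for $\mathbb{S}^n$ (``the obstruction cannot be factored'') is a symptom of this, and no Thom--Gysin argument will resurrect a class that your identification makes identically zero.

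What is actually needed (and what the paper does) is different in kind: the ``$n$-th Stiefel--Whitney class'' of $Pf$ in this proposition must be read as the obstruction cochain of the principal $\mathbb{Z}_2$-bundle relative to the $\mathbb{Z}_2$-CW structure --- the cellular $n$-cochain assigning to a top cell the product of the transition-function values $h(x)$ over the fixed points in its closure --- not the literal characteristic class of a rank-one bundle. The paper's argument for $\mathbb{S}^n$ is the adiabatic one: a chiral state sweeps out an $n$-cycle rather than a $1$-cycle, so the orientation data concentrated at the two poles is read off in $H^n(\mathbb{S}^n,\mathbb{Z}_2)$, giving $\nu = h(N)h(S) = (-1)^{w_n(Pf)}$ essentially by definition of that cochain; the torus is then handled not by K\"unneth but by the collapse map $q: \mathbb{T}^n \to \mathbb{S}^n$, which identifies the top $\mathbb{Z}_2$-cohomology groups (with a second, cobordism-theoretic proof deferred to the next section). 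Your first two paragraphs --- localizing the twisting at $X^\tau$ and writing $\nu$ as the product of $h$ over the fixed points --- are consistent with this and essentially match the paper, but the bridge from that product to a degree-$n$ class has to go through the obstruction-cochain description rather than through cup powers of $w_1$.
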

 
\begin{proof}
   Let us first consider the sphere $X = \mathbb{S}^n$ ($n = 2, 3$), which has 2 fixed points, i.e., 
   $(\mathbb{S}^n)^\tau = \{ N, S \}$ called the north and south poles.    
One defines the Pfaffian line bundle $\pi: Pf \rightarrow \mathbb{S}^n$ as before,
which has one chiral state as a global section. $\pi: Pf \rightarrow \mathbb{S}^n$ is a locally trivial complex line bundle
with possible twists at the fixed points. 
The structure group of the Pfaffian line bundle is $\mathbb{Z}_2$ and its transition function is $h: (\mathbb{S}^n)^\tau = \{N, S \} \rightarrow \mathbb{Z}_2$.

When a  chiral state undergoes an adiabatic process, it sweeps out the entire $n$-sphere once, so instead of a closed 1-cycle, 
one has to use a closed $n$-cycle when computing the Berry phase. 
Hence the orientability of the Pfaffian line bundle $\pi: Pf \rightarrow \mathbb{S}^n$ is determined by the top Stiefel--Whitney class 
$w_n(Pf) \in H^n(\mathbb{S}^n, \mathbb{Z}_2) \cong \mathbb{Z}_2$, and the Kane--Mele invariant is computed by 
$$
\nu = h(N)h(S) = (-1)^{w_n(Pf)}
$$ 

When the momentum space $X = \mathbb{T}^n$ ($n =2, 3$), one has the collapse map $q: \mathbb{T}^n \rightarrow \mathbb{S}^n$.  
It induces a map in cohomology $q^*: H^n(\mathbb{S}^n, \mathbb{Z}_2) \rightarrow H^n(\mathbb{T}^n, \mathbb{Z}_2)$, so that 
the top cohomology of $\mathbb{T}^n$ is the same as that of $\mathbb{S}^n$. So the topological $\mathbb{Z}_2$ invariant over $\mathbb{T}^n$
is determined by that over $\mathbb{S}^n$, and both of them can be computed by the top Stiefel--Whitney class of the Pfaffian line bundle.
We will give another proof for the torus based on cobordsim theory in the next section.
\end{proof}
 
Even though higher order Stiefel--Whitney classes are employed in understanding the topological $\mathbb{Z}_2$ invariant,
there is no other geometric structures involved other than orientability.
This is determined by the $\mathbb{Z}_2$ structure group of the Pfaffian line bundle and the $\mathbb{Z}_2$-CW complex structure of the momentum space.
Physically, an adiabatic process involves the whole momentum space, i.e., an $n$-cycle, instead of a $1$-cycle.

\begin{examp}

In general, for a complex line bundle $\mathcal{L} \rightarrow X$, say $X$ is 2d,
 there exists a relation between the first Chern class of  $\mathcal{L}$ 
and the second Stiefel--Whitney class of $\mathcal{L}$ viewed as a real vector bundle,
$$
 H^2(X, \mathbb{Z}) \rightarrow H^2(X, \mathbb{Z}_2), \quad c_1(\mathcal{L}) \equiv w_2(\mathcal{L}) \quad \text{(mod 2)} \quad \forall\,\, \mathcal{L} \rightarrow X
$$

In our case, the Pfaffian line bundle $\pi: Pf \rightarrow X$ ($X = \mathbb{S}^2$ or $\mathbb{T}^2$) is  a complex line bundle, 
so it is characterized by the first Chern class $c_1(Pf) \in H^2(X, \mathbb{Z})$.
In fact,  $c_1(Pf)$ is a 2-torsion
because of the time reversal symmetry, i.e., $c_1(Pf) \in H^2(X, \mathbb{Z}_2)$, for details see \cite{KLW1501}.
So the second Stiefel--Whitney class is equal to the first Chern class,
$$
w_2(Pf) = c_1(Pf) \in  H^2(X, \mathbb{Z}_2) = \mathbb{Z}_2
$$

\end{examp}

\section{Cobordism theory}\label{Cobthy}

In the previous section, we mainly dealt with the Pfaffian line bundle over the sphere $\mathbb{S}^n$, 
we will take care of the torus $\mathbb{T}^n$ in this section. 
Based on the $\mathbb{Z}_2$-CW complex structure of the involutive space $(\mathbb{T}^n, \tau)$, the 
torus can be viewed as a cobordism connecting sub-tori (possibly with extra structures), 
so the topological $\mathbb{Z}_2$ invariant has a new interpretation in this context.
Physically, an adiabatic process is divided into sub-processes, and each sub-process will contribute to the collective effect.

\subsection{Unoriented cobordism}\label{unorCob}
A cobordism is a natural equivalence relation on the class of compact $n$-manifolds defined by a $(n+ 1)$-manifold with boundary.
By definition, two compact $n$-manifolds $M$ and $N$ are cobordant when there exists a compact $(n+1)$-manifold $W$ whose boundary is the disjoint union 
of $M$ and $N$, i.e., $\partial W = M \sqcup N$. The class of manifolds cobordant to
$M$ is called the cobordism class of $M$, denoted by $[M]$.   

Let $ {\mathfrak {N}}_{n}$ denote the set of cobordism classes of closed unoriented n-manifolds, 
it is an abelian group with the addition defined by the disjoint union or connected sum, $[M] + [N] = [M \sqcup N] = [M \# N]$. Moreover, 
${\mathfrak {N}}_{*} = \sum_{n \geq 0}{\mathfrak {N}}_{n}$ is a graded ring with the multiplication defined by the Cartesian product,
$[M] \times [N] = [M \times N]$. It is well-known that 
$$
{\mathfrak {N}}_{*} = \mathbb{F}_2[x_i ~|~ i \geq 1, i \neq 2^j -1 ] = \mathbb{F}_2[x_2, x_4, x_5, \cdots]
$$
where $\mathbb{F}_2$ is the field with 2 elements. 

In particular, $\mathfrak{N}_2 = \mathbb{Z}_2$ is generated by the real projective plane 
$x_2 =[ \mathbb{RP}^2]$. Two unoriented closed surfaces $S_1$ and $S_2$ are cobordant if and only if their Stiefel--Whitney numbers are the same,
i.e., $\langle w_2(TS_1), [S_1]\rangle = \langle w_2(TS_2), [S_2] \rangle $, where $TS$ is the tangent bundle of a closed surface $S$.
By the relation between the top Stiefel--Whitney class and the Euler class, $w_2(TS)$ can be replaced by the mod 2 Euler class $e(TS)$.
In fact, the mod 2 Euler characteristic $\chi: \mathfrak{N}_2 \rightarrow \mathbb{Z}_2$ is a group isomorphism.
A Klein bottle $\mathbb{K}$ can be constructed as the connected sum of two real projective planes, i.e.,
$\mathbb{K}= \mathbb{RP}^2 \# \mathbb{RP}^2 $.  The Euler characteristic of the real projective plane is $\chi(\mathbb{RP}^2) = 1$ 
and that of the Klein bottle is $\chi(\mathbb{K}) = 0$. In addition, the 2-torus $\mathbb{T}^2$ 
is an orientable closed surface with Euler characteristic $\chi(\mathbb{T}^2) = 0$.

\subsection{2-torus}
Let us first consider the 2-torus $\mathbb{T}^2$, which has 4 fixed points of the time reversal transformation $\tau$.
We pair the nearest neighbor fixed points into two pairs,
and each pair produces a circle by gluing an 1d $\mathbb{Z}_2$-cell. Let us refer to the resulting circles as the north  and south circles, 
denoted by $\mathbb{T}_N$ and $\mathbb{T}_S$, since by convention we think of a cobordism vertically.
These two circles are always cobordant since $\partial (\mathbb{T} \times I) = \mathbb{T} \sqcup \mathbb{T}$, 
and in 2d the effective Brillouin zone is given by $\mathbb{T} \times I$ where $I$ is the unit interval. 
In our case, a circle such as $\mathbb{T}_N$ or $\mathbb{T}_S$   is endowed with an extra structure, i.e., the Pfaffian line bundle, 
so the cobordism class is more interesting.

First of all, the Pfaffian line bundle over a circle is viewed as a \emph{real} line bundle with the structure group $\mathbb{Z}_2$. 
$$
\xymatrix{
Pf \ar[d]^\pi \ar[r]^{f^*} & E\mathbb{Z}_2 \ar[d]^\pi\\
\mathbb{T} \ar[r]^f        & B\mathbb{Z}_2}
$$
Based on the classifying space $B\mathbb{Z}_2 = BO(1) = \mathbb{RP}^\infty$, such Pfaffian line bundle is characterized by the classifying map
$f:  \mathbb{T} \rightarrow B\mathbb{Z}_2 $, i.e., 
$$
[f]\in [ \mathbb{S}^1,  BO(1)] = \pi_1(BO(1)) = \pi_0(O(1)) = \pi_0(\mathbb{S}^0)= \mathbb{Z}_2
$$
This is another way to determine whether the Pfaffian line bundle is orientation-preserving or orientation-reversing in homotopy theory,
which is equivalent to the first Stiefel--Whitney class in cohomology with $\mathbb{Z}_2$ coefficients.

If we restrict the Pfaffian line bundle $\pi: Pf \rightarrow \mathbb{T}^2$ onto a circle $\mathbb{T}$, then we obtain the first 
Stiefel--Whitney class $w_1(Pf|_\mathbb{T}) \in H^1(\mathbb{S}^1, \mathbb{Z}_2)$ characterizing the orientability of the restricted Pfaffian bundle
$Pf|_\mathbb{T} \rightarrow \mathbb{T}$. In our case, we have $w_1(Pf|_{\mathbb{T}_N})$ and $w_1(Pf|_{\mathbb{T}_S})$ for the north and south circles.
In order to use unoriented cobordism \ref{unorCob}, we compactify the Pfaffian line bundle into a circle bundle over $\mathbb{T}$, i.e.,
a closed surface. 
When $w_1(Pf|_\mathbb{T}) = 0$, the restricted Pfaffian bundle is isomorphic to a cylinder, so its compactified bundle is identified with the 2-torus $\mathbb{T}^2$. 
When $w_1(Pf|_\mathbb{T}) = 1$, the restricted Pfaffian bundle is isomorphic to a M{\"o}bius strip, 
and its compactified bundle  is the real projective plane $\mathbb{RP}^2$. 

From the knowledge of the second unoriented cobordism group $\mathfrak{N}_2 = \mathbb{Z}_2$, $[\mathbb{RP}^2]$ and $[\mathbb{T}^2]$ does not belong to
the same cobordism class since they have different Euler characteristics. As a result, 
the  restricted Pfaffian bundles $Pf|_{\mathbb{T}_N}$ and $Pf|_{\mathbb{T}_S}$ are not always cobordant.
In other words, given two circles $\mathbb{T}_N$ and $\mathbb{T}_S$, each is endowed with a principal $\mathbb{Z}_2$-bundle,
 is it possible to find a $2$-manifold $W$ bounded by $\mathbb{T}_N \sqcup \mathbb{T}_S$
together with a principal $\mathbb{Z}_2$-bundle $E \rightarrow W$ so that its restriction to the boundary recovers $Pf|_{\mathbb{T}_N}$ and $Pf|_{\mathbb{T}_S}$?
The answer to this question depends on the orientability of $Pf|_{\mathbb{T}_N}$ and $Pf|_{\mathbb{T}_S}$. 
Indeed, $Pf|_{\mathbb{T}_N}$ and $Pf|_{\mathbb{T}_S}$ are cobordant if and only if $w_1(Pf|_{\mathbb{T}_N}) = w_1(Pf|_{\mathbb{T}_S})$.

Based on the skeleton decomposition of $(\mathbb{T}^2, \tau)$,
an adiabatic process over $\mathbb{T}^2$ can be divided into two sub-processes over $\mathbb{T}_N$ and $\mathbb{T}_S$. 
The local orientabilities of the restricted Pfaffian bundles over $\mathbb{T}_N$ and $\mathbb{T}_S$ together determine
the global orientability of the Pfaffian line bundle $\pi: Pf \rightarrow \mathbb{T}^2$, in terms of Stiefel--Whitney classes 
we have the following relation,
\begin{equation}\label{w2relw1}
   w_2(Pf) \equiv w_1(Pf|_{\mathbb{T}_N}) + w_1(Pf|_{\mathbb{T}_S}) \quad \text{mod 2}
\end{equation}
since orientability is a collective effect. In the left hand side of the above formula, 
it looks like one might use  the restricted Pfaffian bundle $\pi: Pf|_{\mathbb{T} \times I} \rightarrow \mathbb{T} \times I$
 over the effective Brillouin torus $\mathbb{T} \times I$ instead of $\pi: Pf \rightarrow \mathbb{T}^2$. 
However, the key observation is that the local orientability of the Pfaffian line bundle is completely determined by its 
transition function at the fixed points, see the computation in the next paragraph,
and then the restricted Pfaffian bundle can be extended trivially to the whole Brillouin torus $\mathbb{T}^2$.

Recall that $h = pf(w)$ is the transition function of the Pfaffian line bundle $\pi: Pf \rightarrow X$.
 We  interpret the Kane--Mele invariant as the product of change of  $h$  over distinct pairs of nearest neighbor fixed points \ref{KMinv},
$$
 \nu  = \prod_{(x, y) \in X^\tau \times X^\tau} h(x)h(y)  
$$
where a pair of fixed points $(x_1,y_1) $ is different form $(x_2, y_2)$ if $x_1, y_1, x_2, y_2$ are all different.
Hence the relation \eqref{w2relw1} can be expressed in terms of Kane--Mele invariants,
$$
 \nu({\mathbb{T}^2}) = \nu({\mathbb{T}_N}) \nu({\mathbb{T}_S})  
$$
since 
$$
\begin{array}{ll}
   (-1)^{w_2(Pf)}& =  {h(\Gamma)h(A)h(B)h(C)}=  [{h(\Gamma)h(A)}] [{h(B)h(C)}] \\
                  & = (-1)^{w_1(Pf|_{\mathbb{T}_N})} (-1)^{w_1(Pf|_{\mathbb{T}_S})}

\end{array}
$$
where $\Gamma = (0, 0), A = (\pi, 0), B = (0, \pi)$ and $C =(\pi, \pi)$ are the fixed points in the 2-torus. 

\begin{prop}
   The topological $\mathbb{Z}_2$ invariant of the 2-torus $\mathbb{T}^2$ can be understood by unoriented cobordism.
   More precisely, the restricted Pfaffian bundles being cobordant is the obstruction to the orientability of the Pfaffian line bundle $\pi:Pf \rightarrow \mathbb{T}^2$, i.e.
   the triviality of the topological $\mathbb{Z}_2$ invariant $\nu(\mathbb{T}^2)$.
 \end{prop}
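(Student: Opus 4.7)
The plan is to assemble the proposition from three ingredients already developed in the text: (i) the $\mathbb{Z}_2$-CW decomposition of $(\mathbb{T}^2,\tau)$ which splits the four fixed points into two pairs forming the north and south circles $\mathbb{T}_N$ and $\mathbb{T}_S$; (ii) the classification of unoriented closed surfaces up to cobordism by the mod 2 Euler characteristic $\chi:\mathfrak{N}_2\xrightarrow{\sim}\mathbb{Z}_2$; and (iii) the additive relation $w_2(Pf)\equiv w_1(Pf|_{\mathbb{T}_N})+w_1(Pf|_{\mathbb{T}_S})\pmod 2$ together with its Kane--Mele reformulation via $h=pf(w)$ at the fixed points $\Gamma,A,B,C$.

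The first step is to realize each restricted bundle $Pf|_{\mathbb{T}_N}$, $Pf|_{\mathbb{T}_S}$ as a line bundle whose structure group has been reduced to $\mathbb{Z}_2$, and to compactify it fiberwise to a closed surface: a trivial bundle compactifies to $\mathbb{T}^2$ (Euler characteristic $0$), while the M\"obius case compactifies to $\mathbb{RP}^2$ (Euler characteristic $1$). Since $\chi$ is the isomorphism $\mathfrak{N}_2\cong\mathbb{Z}_2$, the two compactified surfaces are cobordant if and only if their Euler characteristics agree, which is if and only if $w_1(Pf|_{\mathbb{T}_N})=w_1(Pf|_{\mathbb{T}_S})$ in $H^1(\mathbb{S}^1,\mathbb{Z}_2)$. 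This gives the cobordism criterion in purely characteristic-class terms.

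The second step is to identify this equality with triviality of $w_2(Pf)$, hence with $\nu(\mathbb{T}^2)=+1$. For this I would invoke the additive formula \eqref{w2relw1} established above: two elements of $\mathbb{Z}_2$ sum to zero iff they are equal, so $w_2(Pf)=0$ iff $w_1(Pf|_{\mathbb{T}_N})=w_1(Pf|_{\mathbb{T}_S})$. Translating via $(-1)^{w_k}$ and the factorization $h(\Gamma)h(A)h(B)h(C)=[h(\Gamma)h(A)]\cdot[h(B)h(C)]$ of the Kane--Mele product gives $\nu(\mathbb{T}^2)=\nu(\mathbb{T}_N)\nu(\mathbb{T}_S)$, so $\nu(\mathbb{T}^2)=+1$ is equivalent to the two sub-circle contributions matching, which by step one is the cobordance of the compactified restricted bundles.

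The main obstacle is step one, namely justifying that the cobordism question about $\mathbb{Z}_2$-bundles over the two circles really reduces to the unoriented cobordism of their compactifications in $\mathfrak{N}_2$: one has to check that any cobordism $W$ realizing $\partial W=\mathbb{T}_N\sqcup\mathbb{T}_S$ together with a principal $\mathbb{Z}_2$-bundle $E\to W$ restricting correctly on the boundary can be replaced, after fiber compactification, by an honest surface cobordism in $\mathfrak{N}_2$, and conversely that a surface cobordism of the compactifications can be decompactified to a bundle cobordism. Everything else is bookkeeping using the already-established relation \eqref{w2relw1} and the dimension-two cobordism invariant $\chi$; the geometric content is concentrated in this compactification/decompactification dictionary, which the excerpt has prepared via the identifications $Pf|_\mathbb{T}\cong \text{cylinder}$ or $\cong M\ddot{o}$.
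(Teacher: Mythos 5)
Your proposal is correct and follows essentially the same route as the paper: the $\mathbb{Z}_2$-CW splitting of $\mathbb{T}^2$ into the circles $\mathbb{T}_N$, $\mathbb{T}_S$, the fiberwise compactification of the restricted bundles to $\mathbb{T}^2$ or $\mathbb{RP}^2$ detected by $\chi:\mathfrak{N}_2\cong\mathbb{Z}_2$, and the additive relation \eqref{w2relw1} translated into $\nu(\mathbb{T}^2)=\nu(\mathbb{T}_N)\nu(\mathbb{T}_S)$ are exactly the ingredients the paper combines. The ``main obstacle'' you flag (the compactification/decompactification dictionary between bundle cobordisms and surface cobordisms) is indeed left implicit in the paper as well, so your write-up is if anything slightly more explicit about where the geometric content lies.
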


\begin{proof}
  The Pfaffian line bundle $\pi: Pf \rightarrow \mathbb{T}^2$ is orientable if and only if $w_2(Pf) = 0 \in H^2(\mathbb{T}^2, \mathbb{Z}_2)$, which is equivalent to
  $w_1(Pf|_{\mathbb{T}_N}) = w_1(Pf|_{\mathbb{T}_S})$, 
  that is, $Pf|_{\mathbb{T}_N}$ and $Pf|_{\mathbb{T}_S}$ are cobordant. 
  From the relations, 
  $$
  \nu(\mathbb{T}_N) = (-1)^{w_1(Pf|_{\mathbb{T}_N})}, \quad \nu(\mathbb{T}_S) = (-1)^{w_1(Pf|_{\mathbb{T}_S})}
  $$
  the Kane--Mele invariant $\nu(\mathbb{T}^2) = \nu(\mathbb{T}_N)\nu(\mathbb{T}_S) = 1$ if and only if the Pfaffian line bundle $\pi: Pf \rightarrow \mathbb{T}^2 $ has cobordant restrictions,
   i.e., $[Pf|_{\mathbb{T}_N} ] = [Pf|_{\mathbb{T}_S} ]$ as cobordism classes.
   In other words, the orientability of the Pfaffian line bundle $\pi: Pf \rightarrow \mathbb{T}^2$ is equivalent to 
   the triviality of the topological $\mathbb{Z}_2$ invariant. As a remark, from the relation \eqref{w2relw1} the orientability of the Pfaffian line bundle
   $\pi: Pf \rightarrow \mathbb{T}^2$ is independent of the choice of $\mathbb{T}_N$ and $\mathbb{T}_S$, i.e., different pairing of nearest neighbor fixed points.
\end{proof}

In \cite{MB07}, the authors studied the homotopy type of mappings from the effective Brillouin zone in 2d, i.e., a cylinder, to the space of time reversal invariant Hamiltonians 
as a subspace of Fredholm operators, denoted by $\mathcal{Q} \subset \mathbf{Fred}$.
They first capped the cylinder by disks to make it into a 2-sphere,  then used the homotopy group $\pi_2(\mathcal{Q}) = [S^2, \mathcal{Q}]$
to understand the topological $\mathbb{Z}_2$ invariant.
For a general momentum space $X \neq S^n$, homotopy theory is replaced by K-theory, since, for example, the space of Fredholm operators
gives a classifying space of complex K-theory: $K(X) = [X, \mathbf{Fred}]$. For the subspace $\mathcal{Q}$ over the 2-torus, 
we have to consider 
the Quaternionic K-theory, i.e.,  $KQ(\mathbb{T}^2)$ \cite{FM13, KLW1501}.
By the isomorphism $KQ(\mathbb{T}^2) \cong KR^{-4}(\mathbb{T}^2) \cong KO^{-2}(pt) = \mathbb{Z}_2$, the topological $\mathbb{Z}_2$ invariant is 
living in the real K-group $KO^{-2}$. There exists a natural transformation between cobordism theory and K-theory, for example, 
$\mu: \Omega_{SU}^* \rightarrow KO^*$. 
It would be interesting to explicitly work out the isomorphism map between the cobordism theory and the K-theory of the topological $\mathbb{Z}_2$ invariant.

\subsection{3-torus}
Similar to the 2-torus, the topological $\mathbb{Z}_2$ invariant of the 3-torus can be also reduced to 
the behavior of restricted Pfaffian bundles over circles.  The 3-torus has 8 fixed points of the time reversal symmetry, we
pair nearest neighbor fixed points together as before to get four pairs, each pair produces a circle by gluing an 1d $\mathbb{Z}_2$-cell, denoted by $\mathbb{T}_i$, $i = 1, 2,3,4$.
Each circle is endowed with a restricted Pfaffian bundle $Pf|_{\mathbb{T}_i} \rightarrow \mathbb{T}_i$ by restricting $\pi: Pf \rightarrow \mathbb{T}^3$, and each restricted Pfaffian bundle 
is isomorphic to either a cylinder or a M{\"o}bius strip. In addition, the orientability of each restricted Pfaffian bundle is characterized
by its first Stiefel--Whitney class $w_1(Pf|_{\mathbb{T}_i})$. 

Next we divide those four pairs of fixed points into two sets, each set has two pairs of distinct fixed points. There are three different
ways to divide 4 pairs into 2 sets, which corresponds to three different embeddings of a 2-torus into the 3-torus. 
For example, we choose $\{ \mathbb{T}_1, \mathbb{T}_2\}$ as the first set, denoted by $N$, and then the second set $S = \{ \mathbb{T}_3, \mathbb{T}_4 \}$.
We will see later that the topological $\mathbb{Z}_2$ invariant is independent of the choice of such sets. 
As mentioned above, the involutive space $(\mathbb{T}^2, \tau)$ has a $\mathbb{Z}_2$-CW complex structure, 
so we can construct a 2-torus by gluing
 $\mathbb{Z}_2$-cells onto $N = \{ \mathbb{T}_1, \mathbb{T}_2\}$, denoted it by $\mathbb{T}^2_N$, 
similarly we get the other 2-torus $\mathbb{T}^2_S$ from the set $S$. 
$\mathbb{T}^2_N$ and $\mathbb{T}^2_S$ are viewed as two sub-tori embedded into $\mathbb{T}^3$.

From the previous subsection, 
the orientability of the restricted Pfaffian bundle  $Pf|_{\mathbb{T}^2_N} \rightarrow {\mathbb{T}^2_N} $ is determined by
the relation
$$
w_2(Pf|_{\mathbb{T}^2_N}) \equiv w_1(Pf|_{\mathbb{T}_1}) +  w_1(Pf|_{\mathbb{T}_2}) \quad \text{mod 2}
$$
and $Pf|_{\mathbb{T}^2_N}$ is orientable if and only if $Pf|_{\mathbb{T}_1}$ and $Pf|_{\mathbb{T}_2}$ are cobordant.
Similarly, 
$$
w_2(Pf|_{\mathbb{T}^2_S}) \equiv w_1(Pf|_{\mathbb{T}_3}) +  w_1(Pf|_{\mathbb{T}_4}) \quad \text{mod 2}
$$

Putting together, we have a similar relation for the Pfaffian line bundle over the 3-torus $\pi: Pf \rightarrow \mathbb{T}^3$,
\begin{equation*}
   w_3(Pf) \equiv  w_2(Pf|_{\mathbb{T}^2_N}) + w_2(Pf|_{\mathbb{T}^2_S})  \equiv \sum_{i =1 }^4 w_1(Pf|_{\mathbb{T}_i}) \quad \text{mod 2}
\end{equation*}
From the right hand side of the above congruence relation, $w_3$ is independent of the choice of  the sets of circles $N$ and $S$.
In other words, the orientability of $\pi: Pf \rightarrow \mathbb{T}^3$ is entirely determined by the restricted Pfaffian bundle
around the fixed points $(\mathbb{T}^3)^\tau$. 

\begin{prop}
   The relation between the weak and strong topological insulators is given by 
   $$
     \nu(\mathbb{T}^3) = \nu(\mathbb{T}^2_N) \nu(\mathbb{T}^2_S)
   $$
   or equivalently, in terms of Stiefel--Whitney classes
   \begin{equation*}
   w_3(Pf) \equiv  w_2(Pf|_{\mathbb{T}^2_N}) + w_2(Pf|_{\mathbb{T}^2_S})    \quad \text{mod 2}
\end{equation*}
where $\mathbb{T}^2_N$ and $\mathbb{T}^2_S$ are two sub-2-tori containing distinct 4 fixed points, i.e., $(\mathbb{T}^2_N)^\tau \cap (\mathbb{T}^2_S)^\tau = \emptyset$.
\end{prop}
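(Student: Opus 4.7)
The plan is to combine the already-established 2-dimensional relation with the congruence the authors have just derived for $w_3(Pf)$ on the 3-torus. Concretely, since each sub-torus $\mathbb{T}^2_N$ and $\mathbb{T}^2_S$ carries the $\mathbb{Z}_2$-CW structure inherited from $\mathbb{T}^3$, with the four fixed points of $\mathbb{T}^2_N$ (resp.\ $\mathbb{T}^2_S$) paired into the circles $\mathbb{T}_1,\mathbb{T}_2$ (resp.\ $\mathbb{T}_3,\mathbb{T}_4$), I would first apply the 2-torus proposition separately to each, obtaining
\begin{equation*}
  w_2(Pf|_{\mathbb{T}^2_N}) \equiv w_1(Pf|_{\mathbb{T}_1}) + w_1(Pf|_{\mathbb{T}_2}), \qquad
  w_2(Pf|_{\mathbb{T}^2_S}) \equiv w_1(Pf|_{\mathbb{T}_3}) + w_1(Pf|_{\mathbb{T}_4}) \pmod 2.
\end{equation*}
Adding these two congruences and matching against the decomposition $w_3(Pf) \equiv \sum_{i=1}^{4} w_1(Pf|_{\mathbb{T}_i}) \pmod 2$ derived above Proposition, immediately yields the Stiefel--Whitney form of the claim.

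Next I would transfer this cohomological identity to the multiplicative statement about Kane--Mele invariants. Using the identifications $(-1)^{w_k(Pf)} = \prod_{x\in X^{\tau}} h(x)$ established in the preceding sections, the relation $(-1)^{w_3(Pf)} = (-1)^{w_2(Pf|_{\mathbb{T}^2_N})} \cdot (-1)^{w_2(Pf|_{\mathbb{T}^2_S})}$ translates directly into $\nu(\mathbb{T}^3) = \nu(\mathbb{T}^2_N)\,\nu(\mathbb{T}^2_S)$. The key point is that the product of $h$-values over all eight fixed points $(\mathbb{T}^3)^\tau$ factors through the disjoint partition $(\mathbb{T}^2_N)^\tau \sqcup (\mathbb{T}^2_S)^\tau$, which is guaranteed by the hypothesis $(\mathbb{T}^2_N)^\tau \cap (\mathbb{T}^2_S)^\tau = \emptyset$ together with $|(\mathbb{T}^3)^\tau|=8$.

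Finally, I would close by noting that the right-hand side of the Stiefel--Whitney relation, when expanded as $\sum_i w_1(Pf|_{\mathbb{T}_i}) \pmod 2$, is symmetric in the four circles, hence independent of how one regroups them into $N$ and $S$; this confirms that the topological $\mathbb{Z}_2$ invariant does not depend on the embedding of the sub-2-tori chosen, and vindicates interpreting $\nu(\mathbb{T}^2_N), \nu(\mathbb{T}^2_S)$ as the weak invariants and $\nu(\mathbb{T}^3)$ as the strong invariant.

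The main obstacle, in my view, is not the algebra but ensuring that the restriction of the Pfaffian line bundle is genuinely compatible with the $\mathbb{Z}_2$-CW decomposition in the sense required. Specifically, one must verify that the transition function $h=pf(w)$ on each sub-torus agrees with the one obtained by first restricting $w$ and then taking the Pfaffian, so that $w_2(Pf|_{\mathbb{T}^2_{N/S}})$ can legitimately be computed from the four fixed points contained in that sub-torus alone. This follows from the earlier observation that $h$ is supported on $X^\tau$ and extended trivially, but it should be invoked explicitly to justify that the 2d proposition applies to $Pf|_{\mathbb{T}^2_N}$ and $Pf|_{\mathbb{T}^2_S}$ exactly as it applies to a standalone Pfaffian bundle over a 2-torus.
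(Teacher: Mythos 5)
Your derivation is correct, and as far as the two displayed identities are concerned it follows the same route the paper takes---but the paper carries out that computation in the paragraphs immediately \emph{preceding} the proposition (pairing the eight fixed points into the four circles $\mathbb{T}_i$, summing the classes $w_1(Pf|_{\mathbb{T}_i})$, and regrouping them into the sets $N$ and $S$), and then uses the proof environment for something else. Inside the proof the authors do not re-derive $w_3(Pf) \equiv w_2(Pf|_{\mathbb{T}^2_N}) + w_2(Pf|_{\mathbb{T}^2_S})$; they unpack what the identity says about the weak/strong dichotomy: a weak topological insulator is precisely the case $\nu(\mathbb{T}^2_N) = \nu(\mathbb{T}^2_S) = -1$ with $\nu(\mathbb{T}^3) = 1$, i.e.\ both restrictions $Pf|_{\mathbb{T}^2_N}$ and $Pf|_{\mathbb{T}^2_S}$ non-orientable while $Pf \rightarrow \mathbb{T}^3$ is orientable, and they restate this as the condition that $[Pf|_{\mathbb{T}^2_N}] = [Pf|_{\mathbb{T}^2_S}] \neq 0$ as unoriented cobordism classes. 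So your write-up is the more complete argument for the formulas themselves---and your closing concern about compatibility of restriction with the $\mathbb{Z}_2$-CW structure is exactly the point the paper addresses by noting that $h = pf(w)$ is supported on the fixed points and extended trivially---but it omits the interpretive step that ties the identity to the words ``weak'' and ``strong,'' which is the content the paper's proof actually supplies. A complete version would include both: your additive computation establishes the congruence, and the paper's case analysis explains why it encodes the weak/strong relation.
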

 
\begin{proof}
   A 3d topological insulator is said to be weak if restricted to a sub-2-torus the Kane--Mele invariant is non-trivial, 
   i.e., $\nu(\mathbb{T}^2) = -1$, but in fact it is an ordinary insulator, i.e., $\nu(\mathbb{T}^3) = 1 $.
   In other words,  the restricted Pfaffian bundle $Pf|_{\mathbb{T}^2}$ is not orientable, but   $\pi: Pf \rightarrow \mathbb{T}^3$ is orientable.
   This situation can only happen when $\nu(\mathbb{T}^2_N) = \nu(\mathbb{T}^2_S) = -1$ or equivalently when $ w_2(Pf|_{\mathbb{T}^2_N}) = w_2(Pf|_{\mathbb{T}^2_S}) =1  $
   for a certain pair of 2-tori $\mathbb{T}^2_N$ and $\mathbb{T}^2_S$.
   
   In addition, the above weak-strong relation can be expressed in terms of cobordism.
   Indeed,  $[Pf|_{\mathbb{T}^2_N}] = [Pf|_{\mathbb{T}^2_S}] \neq 0$ are non-trivial cobordism classes 
   if and only if $\pi: Pf \rightarrow \mathbb{T}^3$ is orientable but its restrictions to $Pf|_{\mathbb{T}^2_N}$ and $Pf|_{\mathbb{T}^2_S}$ are not orientable, 
   that is, the total topological $\mathbb{Z}_2$ invariant is trivial $\nu(\mathbb{T}^3) = 1$, but $\nu(\mathbb{T}^2_N) = \nu(\mathbb{T}^2_S) = -1$.
\end{proof}
 
\section{Extended TQFT}\label{extTQFT}
The topological $\mathbb{Z}_2$ invariant over the 3-torus gives rise to a toy model of a  fully extended $3$-dimensional topological quantum field theory (TQFT).

Fix an $n$-manifold $M$, let $Bord_{M}$ be the bordism category whose morphisms are $k$-dimensional $(k \leq n)$ submanifolds of $M$ and
whose objects are connected components of the boundaries of such submanifolds. In other words, if $M$ can be decomposed into $0$-, $1$-, $\cdots$, $n$-manifolds with corners,
then the category $Bord_{M}$ has objects compact $0$-manifolds, $1$-morphisms compact $1$-manifolds
with boundary, $2$-morphisms compact $2$-manifolds with corners, and so on.
By definition, an extended $n$-dimensional TQFT on $M$ is a symmetric monoidal functor from $Bord_{M}$ to the category of vector spaces,
$$
F: (Bord_{M}, \sqcup) \rightarrow (\mathbf{Vect}, \otimes)
$$
In general, one needs an $(\infty, n)$-category $\mathcal{C}$ to define an extended TQFT as a homomorphism $F: Bord_{M} \rightarrow \mathcal{C}$.
However, in our case we only use the category of vector spaces $\mathbf{Vect}$, so we call it a toy model of an extended TQFT.

\begin{examp}
Let us first define an extended TQFT  on $\mathbb{T}^3$, then we will state the theorem following this example. 
The $\mathbb{Z}_2$-CW complex structure of $(\mathbb{T}^3, \tau)$ can be derived from that of the circle $(\mathbb{T}, \tau)$. Roughly, 
if we only decompose one circle as a $\mathbb{Z}_2$-CW complex, 
$$
\mathbb{T}^3 = \mathbb{T}^2 \times (\mathbb{T}^\tau \sqcup D^1 \times \mathbb{Z}_2) = \mathbb{T}^2 \times \mathbb{S}^0 \sqcup \mathbb{T}^2 \times D^1 \times \mathbb{Z}_2 
    = (\mathbb{T}^2 \times I) \curvearrowleft \mathbb{Z}_2
$$
then $\mathbb{T}^3$ (modulo the $\mathbb{Z}_2$ action) can be viewed as a cobordism (i.e., $\mathbb{T}^2 \times I$) with two sub-2-tori as the boundary,
since $\mathbb{T}^\tau = \mathbb{S}^0$ is fixed by the $\mathbb{Z}_2$ action and $D^1 = I \setminus \mathbb{S}^0$ has a free $\mathbb{Z}_2$ action.
Physically, this reduction is equivalent to taking the effective Brillouin zone of $\mathbb{T}^3$, i.e., $\mathbb{T}^2 \times I$,  since the time reversal $\mathbb{Z}_2$ symmetry
can recover the omitted mirror image easily. If we think of the cobordism vertically, then we call one $\mathbb{T}^2$ the north torus, denoted by $\mathbb{T}^2_N$, 
and the other the south torus, denoted by $\mathbb{T}^2_S$.

With the Pfaffian line bundle $\pi: Pf \rightarrow \mathbb{T}^3$, we can define a functor,  called the partition function,
$$
Z: Bord_{\mathbb{T}^3} \rightarrow \mathbf{Vect}_{\mathbb{F}_2}
$$
such that 
$Z(\mathbb{T}^3) = w_3(Pf) \in H^3(\mathbb{T}^3, \mathbb{Z}_2) = \mathbb{Z}_2$ and 
$Z(\mathbb{T}^2_i) = w_2(Pf|_{\mathbb{T}^2_i}) \in H^2({\mathbb{T}^2_i}, \mathbb{Z}_2)= \mathbb{Z}_2$ for $i = N, S$.
By definition, $Z$ is a homomorphism that maps an $n$-submanifold to its top cohomology group with $\mathbb{Z}_2$ coefficients, i.e., $Z(-)=H^n(-,\mathbb{Z}_2)$. 

We similarly define the partition function $Z$ on $\mathbb{T}^2$ with the help of the Pfaffian line bundle $\pi: Pf \rightarrow \mathbb{T}^2$.
First we view $\mathbb{T}^2$ (modulo the $\mathbb{Z}_2$ action) as a cobordism, that is,  
we reduce $\mathbb{T}^2$ to $\mathbb{T} \times I$ by the time reversal symmetry, 
 $$
\mathbb{T}^2 = \mathbb{T} \times (\mathbb{T}^\tau \sqcup D^1 \times \mathbb{Z}_2) = \mathbb{T} \times \mathbb{S}^0 \sqcup \mathbb{T} \times D^1 \times \mathbb{Z}_2 
     = (\mathbb{T} \times I)  \curvearrowleft \mathbb{Z}_2
$$
So  $\mathbb{T} \times I$
connects the north circle and the south circle, 
denoted by $\mathbb{T}_N$ and $\mathbb{T}_S$,  i.e., $\partial (\mathbb{T} \times I) = \mathbb{T}_N \sqcup \mathbb{T}_S$.
Next, we define the partition function $Z :  Bord_{\mathbb{T}^2} \rightarrow \mathbf{Vect}_{\mathbb{F}_2}$ as above such that 
$Z(\mathbb{T}^2) = w_2(Pf) \in H^2(\mathbb{T}^2, \mathbb{Z}_2) = \mathbb{Z}_2$ and 
$Z(\mathbb{T}_i) = w_1(Pf|_{\mathbb{T}_i}) \in H^1({\mathbb{T}_i}, \mathbb{Z}_2)= \mathbb{Z}_2$ for $i = N, S$.

Finally, we define the partition function $Z$ on the circle $(\mathbb{T}, \tau)$. First we view $\mathbb{T}$ (modulo the $\mathbb{Z}_2$ action) as a cobordism 
$$
\mathbb{T} = \mathbb{S}^0 \sqcup D^1 \times \mathbb{Z}_2  = I \curvearrowleft \mathbb{Z}_2
$$
such that $\partial I = \mathbb{S}^0 = \{ N, S \}$ consisting of the north pole and the south pole.
With the Pfaffian line bundle $\pi: Pf \rightarrow \mathbb{T}$, we define $Z(\mathbb{T}) = w_1(Pf) \in H^1(\mathbb{T}, \mathbb{Z}_2) = \mathbb{Z}_2 $ and
$Z(pt) = h(pt) \in \mathbb{Z}_2$  for $pt \in \{ N, S \}$ where $h$ is the transition function of the Pfaffian line bundle $\pi: Pf \rightarrow \mathbb{T}$,
 $$
 h \in H^0(\mathbb{T}^\tau, \mathbb{Z}_2) = H^0(\mathbb{S}^0, \mathbb{Z}_2) = H^0( \mathbb{Z}_2, \mathbb{Z}_2) \cong  \mathbb{Z}_2
 $$  
\end{examp}

\begin{thm}
 The topological $\mathbb{Z}_2$ invariant defines the partition function of an extended TQFT over the  momentum space $(X, \tau)$, 
 $$
 \nu: (Bord_{X}, \sqcup) \rightarrow (\mathbf{Vect}_{\mathbb{F}_2}, \otimes) 
 $$
\end{thm}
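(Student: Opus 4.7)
The plan is to build the symmetric monoidal functor $\nu$ by specifying its value on each cell of the $\mathbb{Z}_2$-CW decomposition of $X$ using top Stiefel--Whitney classes of the restricted Pfaffian bundle, and then to verify the three structural requirements --- object/morphism assignment, composition, and monoidality --- using the relations already established in Section~\ref{Cobthy}. Concretely, to each $k$-submanifold $Y \subset X$ I assign $\nu(Y) = w_k(Pf|_Y) \in H^k(Y, \mathbb{Z}_2) \cong \mathbb{F}_2$, and to $0$-cells (fixed points) $\nu(pt) = h(pt) = pf(w(pt)) \in \mathbb{Z}_2$. Because each target cohomology group is one-dimensional over $\mathbb{F}_2$, the functor lands in $\mathbf{Vect}_{\mathbb{F}_2}$.

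First I would verify the monoidal axiom. For disjoint $Y_1 \sqcup Y_2$ the restricted Pfaffian bundle decomposes across components, and in the multiplicative form $\nu(Y) := (-1)^{w_k(Pf|_Y)} \in \{\pm 1\} \cong \mathbb{F}_2^{\times}$ one obtains $\nu(Y_1 \sqcup Y_2) = \nu(Y_1) \cdot \nu(Y_2)$, matching the tensor product of one-dimensional $\mathbb{F}_2$-vector spaces. The swap $Y_1 \sqcup Y_2 \cong Y_2 \sqcup Y_1$ corresponds to commutativity of multiplication in $\mathbb{F}_2$, providing the symmetric structure.

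Next I would verify functoriality. A $(k+1)$-morphism in $Bord_X$ is a cobordism $W$ with $\partial W = Y_{\mathrm{in}} \sqcup Y_{\mathrm{out}}$, and composition glues along common boundaries. The key inputs are the cell-decomposition relations proved in Section~\ref{Cobthy}: $w_2(Pf) \equiv w_1(Pf|_{\mathbb{T}_N}) + w_1(Pf|_{\mathbb{T}_S}) \pmod{2}$ and $w_3(Pf) \equiv \sum_{i=1}^{4} w_1(Pf|_{\mathbb{T}_i}) \pmod{2}$. Exponentiating, these become product rules $\nu(W) = \prod_i \nu(\partial_i W)$, which is exactly the statement that the partition function of a glued cobordism factors as a composition of partition functions on its boundary pieces. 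Identity bordisms $Y \times I$ carry a trivial Pfaffian restriction away from fixed points, so $\nu(Y \times I) = 1$, and identities are preserved.

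The main obstacle will be establishing the \emph{fully extended} condition for $X = \mathbb{T}^3$: namely, that the whole functor is determined by the $0$-dimensional data $h: X^\tau \to \mathbb{Z}_2$. This requires iterating the cell reduction downward --- a $3$-cell reduces to $2$-dimensional boundary data, which reduces to $1$-dimensional boundary data, which ultimately reduces to the transition functions $h$ at the fixed points. The inductive step at each level is the same Stiefel--Whitney addition formula, giving a clean descent through the $\mathbb{Z}_2$-CW skeleton of $X$. Once this descent is in place, the remaining coherence axioms (associators, unitors, braidings) are automatic, since the target has one-dimensional Hom spaces and all composition data is encoded by products in $\mathbb{F}_2$.
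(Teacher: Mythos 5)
Your proposal follows essentially the same route as the paper: define the partition function on the $\mathbb{Z}_2$-CW/bordism decomposition by (exponentiated) top Stiefel--Whitney classes of the restricted Pfaffian bundle, get monoidality from additivity over disjoint unions, get composition from the Section~\ref{Cobthy} relations $w_{k+1}(Pf) \equiv \sum_i w_k(Pf|_{\partial_i})$, and reduce everything to the fixed-point data $h$. The only quibble is the identification $\{\pm 1\} \cong \mathbb{F}_2^{\times}$ (the latter group is trivial); the intended target is the multiplicative group of order two, which is the same informal identification the paper itself makes.
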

\begin{proof}
    Based on the band structure of a topological insulator,
   one has the Pfaffian line bundle over the momentum space, $\pi: Pf \rightarrow X$. 
   Since the momentum space is closed, we assume that the set of fixed points $X^\tau$ is finite and  has even points.
   Because of the $\mathbb{Z}_2$-CW complex structure of $(X, \tau)$, we view $X$ as a bordism category $Bord_X$. Indeed, 
   the reverse process of the  $\mathbb{Z}_2$-CW construction by gluing $\mathbb{Z}_2$-cells onto the fixed points $X^\tau$ gives the cobordism decomposition of $Bord_X$.
   
   Now we define the partition function as 
   $$
     \nu : Bord_X \rightarrow \mathbf{Vect}_{\mathbb{F}_2}; \quad M \mapsto \nu(M) = (-1)^{w_i(Pf|_M)}
   $$
   Since $w_i(Pf|_M) \in H^i(M, \mathbb{Z}_2)$ for an $i$-submanifold $M$, the range of the Kane--Mele invariant is the exponentiated cohomology groups, 
   the effective result is to convert the addition operation between Stiefel--Whitney classes into an multiplication operation.
   However, with the $\mathbb{Z}_2$ coefficients, the exponentiated additive group $\mathbb{Z}_2$ is isomorphic to the multiplicative group $\mathbb{Z}_2$,
   so we identify the range of $\nu$ as the cohomology groups as before, but with multiplication instead of addition,
   i.e., elements in the monoidal category $(\mathbf{Vect}_{\mathbb{F}_2}, \otimes) $ (vector spaces over the field $\mathbb{F}_2$ with the monoidal operation $\otimes$).
   
   From the above example, we have seen that the partition function defined by the Stiefel--Whitney classes of the Pfaffian line bundle restricted to proper submanifolds 
   gives rise to an extended TQFT. The partition function $Z(-) = w_i(Pf|_{-})$ is a symmetric monoidal functor since
   $$
   Z(M \sqcup N) = w_i(Pf|_{M \sqcup N} ) = w_i(Pf|_M) + w_i(Pf|_N)
   $$ 
   if $M$ and $N$ are $i$-manifolds. By the relation
   $\nu(M) = (-1)^{Z(M)}$, $\nu$ is also a symmetric monoidal functor.
   
   Notice that this extended TQFT is completely determined by the $0$-manifolds, i.e., the fixed points $X^\tau$,
   since the Kane--Mele invariant is determined by $X^\tau$,
   $$
   \nu(X) = \prod_{x \in X^\tau} h(x)
   $$ where $h$ is the transition function of the Pfaffian line bundle $\pi: Pf \rightarrow X$.
\end{proof}

In fact, the above extended TQFT is a 3-2-1-0 Chern--Simons theory with the gauge group $\mathbb{Z}_2$.
To spell it out, one needs all characterizations of the topological $\mathbb{Z}_2$ invariant \cite{KLW15}. 
This project will be carried out in a future paper.

\appendix
\section{Appendices} \label{Appdx}
 \subsection{M{\"o}bius strip} \label{Mob}

Let us briefly review some basic facts about the M{\"o}bius strip, which will be viewed as a real line bundle over the circle with non-trivial Stiefel--Whitney class.
The M{\"o}bius strip is an open surface with boundary, capping the  M{\"o}bius strip by a 2-disc, one could obtain the real projective plane $\mathbb{RP}^2$ or the Klein bottle. 

The open M{\"o}bius strip is a non-orientable surface with Euler characteristic $\chi = 0$, which can be constructed by gluing the two ends of a paper strip after a half-twist. 
As a vector bundle, the M{\"o}bius strip is a real line bundle over the circle, i.e., $\pi: M\text{\"o} \rightarrow \mathbb{S}^1$, 
with a typical fiber $F = M\text{\"o}_x \cong \mathbb{R}$.
If one focuses on orientability, then the M{\"o}bius strip can be easily distinguished from the cylinder, i.e., a trivial real line bundle over the circle.

 Let $U$ and $V$ be two open subsets that cover $\mathbb{S}^1$, i.e., $U \cup V = \mathbb{S}^1$.
For example, if $\mathbb{S}^1$ is parametrized by the angle $\theta$, one can take $U = (0 - \epsilon, \pi + \epsilon)$ and $V = (-\pi - \epsilon, 0 + \epsilon)$ for a fixed small number $\epsilon > 0$.
 The M{\"o}bius strip $M\text{\"o}$ is  locally trivial, 
that is, $M\text{\"o}|_U \cong U \times \mathbb{R}$ and  $M\text{\"o}|_V \cong V \times \mathbb{R}$. However,  $M\text{\"o}$ is globally non-trivial, 
which is characterized by the transition function $g : U \cap V \rightarrow GL(1, \mathbb{R})$.
Indeed, $U \cap V = (0-\epsilon, 0+\epsilon) \cup (\pi - \epsilon, \pi+ \epsilon)$
if we identify $\pi$ with $-\pi$, let $O_0 = (0-\epsilon, 0+\epsilon) $ and $O_\pi = (\pi - \epsilon, \pi+ \epsilon)$,
so that $U \cap V = O_0 \cup O_\pi$.  
By the construction of the M{\"o}bius strip, there exists one half-twist before gluing the two ends of a paper strip. 
If a fiber $F \cong \mathbb{R}$ has the coordinate $t$, 
then the transition function, denoted by $g$, is the identity map on one overlap, say $O_0$,
$$
g : O_0 \rightarrow GL(1, \mathbb{R}); \quad g(\theta) = 1: \mathbb{R} \rightarrow \mathbb{R},\,\, t \mapsto t
$$
Whereas $g$ is the minus identity map on the other overlap,
$$
g : O_\pi \rightarrow GL(1, \mathbb{R}); \quad g(\theta) = -1: \mathbb{R} \rightarrow \mathbb{R},\,\, t \mapsto -t
$$
which gives the half-twist mentioned above.

Since we only care about orientability of a real line bundle, we consider the structure group $\mathbb{Z}_2$ instead of $GL(1, \mathbb{R}) = \mathbb{R}^\times$.
So the transition function of $M\text{\"o}$ is reduced to $g: \{0, \pi \} \rightarrow \mathbb{Z}_2$ since $\lim O_0 = 0 $
and $\lim O_\pi = \pi$ as $\epsilon \rightarrow 0$, 
\begin{equation}\label{MStrans}
   g:
\begin{cases}
  0 \rightarrow +1,  & \text{orientation-preserving} \\
  \pi \rightarrow -1, & \text{orientation-reversing}
\end{cases}
\end{equation}
Therefore, the M{\"o}bius strip $M\text{\"o} \rightarrow \mathbb{S}^1$ is a non-orientable bundle 
since the transition function $g$ reverses the local orientations of $M\text{\"o}$ between different coordinate patches.

\subsection{Stiefel--Whitney class}
As an example of a $\mathbb{Z}_2$-characteristic class, the Stiefel--Whitney class provides a  topological invariant of a real vector bundle. 
Given a real vector bundle $\pi: E \rightarrow X$, its total Stiefel--Whitney class, denoted by $w(E)$, is an element of the 
cohomology ring $H^*(X, \mathbb{Z}_2)$ with coefficients in $\mathbb{Z}_2$. If the cohomology ring $H^*(X, \mathbb{Z}_2)$ is decomposed into
cohomology groups with different degrees, then one has the $i$-th Stiefel--Whitney class of $E$, denoted by $w_i(E) \in H^i(X, \mathbb{Z}_2)$, 
$$
 H^*(X, \mathbb{Z}_2) = \oplus_{i = 0}^n H^i(X, \mathbb{Z}_2); \quad w(E) = 1 + w_1(E) + w_2(E) + \cdots + w_n(E)
$$
where $n$ is the rank of $E$.
The Stiefel--Whitney classes can be constructed by Steenrod square operations and the Thom isomorphism, 
which can be realized on real Grassmannians $Gr_n = Gr(n, \mathbb{R}^\infty)$ \cite{MS74}.

The Stiefel--Whitney classes play important roles in obstruction theory, that is, the vanishing of a Stiefel--Whitney class is the obstruction to some geometric structure.
For example, the first Stiefel--Whitney class establishes an isomorphism between real line bundles and the first cohomology group,
$$
w_1: Vect_1(X) = [X, Gr_1] \rightarrow H^1(X, \mathbb{Z}_2)
$$
In addition, the first Stiefel--Whitney class detects the orientability of a real bundle $E \rightarrow X$, 
that is,  $w_1(E) = 0$ if and only if $E$ is orientable.
In other words, the first Stiefel--Whitney class $w_1$ is the obstruction to orientability.
As an example, the M{\"o}bius strip $\pi: M\text{\"o} \rightarrow \mathbb{S}^1$ is non-orientable and
$w_1(M\text{\"o}) = 1 \in H^1(\mathbb{S}^1, \mathbb{Z}_2)$. 
In general, a manifold $M$ is called orientable if its tangent bundle is orientable, i.e., $w_1(TM) = 0$. 

Higher order Stiefel--Whitney classes are obstructions to other interesting geometric structures.
A real bundle $E \rightarrow X$ admits a spin structure if and only if $w_1(E) = w_2(E) = 0$.
So for an orientable real bundle, the second Stiefel--Whitney class $w_2$ is the obstruction to spin structure.
A manifold $M$ is a spin manifold if its tangent bundle admits a spin structure.
For an orientable real bundle $E \rightarrow X$, the second Stiefel--Whitney class $w_2(E) \in H^2(X, \mathbb{Z}_2)$ is in the image of the mod 2 map
$H^2(X, \mathbb{Z}) \rightarrow H^2(X, \mathbb{Z}_2)$. This fact is equivalent to  the vanishing of the third integral Stiefel--Whitney class $W_3(E) \in H^3(X, \mathbb{Z})$. 
In addition, $W_3(E) = 0$ if and only if $E$ admits a spin$^c$ structure. 
In other words,  the third integral Stiefel--Whitney class $W_3$
is the obstruction to spin$^c$ structure on an orientable real bundle. 
For an oriented real vector bundle $E \rightarrow X $ with rank $n$, 
one defines the Euler class $e(E) \in H^n(X, \mathbb{Z})$ based on the orientation class,
the top Stiefel--Whitney class is the Euler class ignoring orientation, i.e., $w_n(E) \equiv e(E) \,\, \text{(mod 2)}$,
$$
 H^n(X, \mathbb{Z}) \rightarrow H^n(X, \mathbb{Z}_2); \quad e(E) \mapsto w_n(E)
$$

\subsection{Berry phase} \label{BerPh}
Given a time reversal invariant Hamiltonian $H$, we want to study its band structure, i.e., the topological band theory in the language of vector bundles.
 In fact, the above Hamiltonian $H$ is parametrized by the momentum space, i.e., a family of Hamiltonians $H(x)$ for  $x \in X$. 
 So one should expect a Berry phase 
 when an electronic state undergoes a cyclic adiabatic process. 
 Berry phase is the prototypical example of a geometric phase, 
 which can be computed by a contour integral of the Berry connection over a closed cycle.
 Let us briefly review the Berry connection and Berry curvature, and derive a useful constraint of a geometric phase.
 
 The Berry connection is defined by a covariant derivative $D$ on a Hilbert bundle $\pi: \mathcal{H} \rightarrow X$,
 for simplicity, let us assume $D$ equals the exterior derivative $d$.
 For a section $s \in \Gamma(X, \mathcal{H})$, the local form of the Berry connection is given by
 $  A(x) :=  \langle s(x), ds(x) \rangle  $
 and the Berry curvature is defined by $F = dA$ with the local form $F(x) : =  d\langle s(x), ds(x) \rangle$.
 Under a gauge transformation, the Berry connection $A$ is gauge dependent and the Berry curvature $F$ is gauge invariant.

 For a closed cycle $C$, the Berry phase is defined as 
 $$
 \gamma : = i \oint_C A
 $$
 If the closed cycle $C$ forms the boundary of a surface $S$, then by the Stoke's theorem,
 $$
 \gamma =i \int_S F
 $$
 Furthermore, if the surface $S$ happens to be a momentum space,
 then the Berry phase gives us the first Chern number, 
 $$
 c_1 = \frac{1}{2\pi}\, \gamma = \frac{i}{2\pi}  \int_{S} F
 $$
 
  When applying a gauge transformation to a section $s \in \Gamma(X, \mathcal{H})$,
 $$
  s(x) \mapsto e^{-i\beta(x)} s(x)
 $$
 where $\beta: X \rightarrow \mathbb{R}$ is a continuous real-valued function,
 the Berry connection is not gauge invariant, so it is transformed to 
 $$
 A(x) \mapsto A(x) - i d\beta(x)
 $$
 As a result, the Berry phase acquires an extra term derived from the phase function $\beta$,
 $$
 \gamma \mapsto \gamma +  \oint_C  d\beta
 $$
 Therefore,   $\beta$ has to satisfy the constraint,
 $$
  \oint_C  d\beta \equiv 0 \quad \text{mod} \,\, 2\pi 
 $$
 since $exp\{ i \gamma \}$ is supposed to be gauge invariant. If we reparametrize the coordinate of $C$ by the time $t$,
 the above condition is then expressed as 
 \begin{equation}\label{betaCond}
     \beta(T) - \beta(0) = 2k\pi, \quad  k \in \mathbb{Z}
 \end{equation}
 provided that the adiabatic process
 is completed in the time interval $ [0, T]$.
 
 \subsection{Kane--Mele invariant}\label{KMinv}
 The topological $\mathbb{Z}_2$ invariant has many equivalent characterizations from different perspectives, 
 here we use the Kane--Mele invariant as its definition.
 The Kane--Mele invariant was originally derived from the quantum spin Hall effect on graphene \cite{FK06}, 
 and later generalized to three dimensional topological insulators. 
 Let us briefly review the Kane--Mele invariant and explain its physical meaning in this appendix. 
 
 One considers the Hilbert bundle $\pi: \mathcal{H} \rightarrow X$ in \S2.1 characterized by the transition function $w$, 
 and the set of fixed points of the time reversal symmetry is $X^\tau$.
 At any fixed point $x\in X^\tau$,  the transition function $w$ from eq.\eqref{TransFunc} becomes a skew-symmetric matrix, i.e., $ w^T(x) =- w ({x})$, where $T$ is the transpose of a matrix. So it makes sense to
take the Pfaffian function of $w$ at any fixed point $x \in X^\tau$, denoted by $pf[w(x)]$.
 The Kane--Mele invariant $\nu$ is defined as the product of the signs of Pfaffians over the fixed points, 
\begin{equation}  
   \nu := \prod_{x_i \in X^\tau} sgn(pf[w(x_i)])    
\end{equation} 
Recall the relation between the Pfaffian and determinant functions is given by 
$pf^2(A) = \det(A)$  for a skew-symmetric matrix $A$, so that $pf(A) = \pm \sqrt{\det(A)}$.
By comparing with the square root of the determinant of $w$ at $x \in X^\tau$, i.e., $\sqrt{\det[w(x)]}$,
the sign of Pfaffians can be written as the ratio,
\begin{equation*} 
   \nu  = \prod_{x_i \in X^\tau} \frac{pf [w(x_i)]}{\sqrt{\det [w(x_i)]}}   
\end{equation*} 
When the Kane--Mele invariant $\nu = -1$, the material is a non-trivial topological insulator, and when 
$\nu = 1$, the material is a classical insulator.
 
 Geometrically, the Pfaffian function is promoted to a section of the   Pfaffian line bundle $Pf \rightarrow X$, 
 and taking the sign of Pfaffians reduces the structure group of the Pfaffian line bundle to  $\mathbb{Z}_2$. 
 This construction gives rise to the orientation bundle $Or \rightarrow X$ associated with the Pfaffian line bundle, 
 which is a canonical principal $\mathbb{Z}_2$-bundle.  The Pfaffian bundle is orientable if and only if its orientation
 bundle is trivializable, that is, there exists a trivialization such that $Or \cong X \times \mathbb{Z}_2$.
 
 The Kane--Mele invariant $\nu$ collects the information about all possible twists running through the fixed points.
 A negative sign from $sgn(pf[w(x_i)]) $ for some fixed point $x_i \in  X^\tau$ is a twist that reverses the local orientation 
 of the Pfaffian line bundle. Furthermore, another twist will reverses the local orientation around another fixed point,
 so the effective result of two negative signs,
 i.e., two twists, is to preserve the orientation. On the other hand, a positive sign from $sgn(pf[w(x_i)]) $
 for some fixed point $x_i \in  X^\tau$ does not change the local orientation, i.e., orientation-preserving. 
 The product of all signs of Pfaffians, i.e., the Kane--Mele invariant $\nu$, determines whether the global orientation is
 preserved ($\nu = +1$) or reversed ($\nu = -1$).
 
 If we pair nearest neighbor fixed points together, then the Kane--Mele invariant is interpreted as the product of change of signs of Pfaffians
 over pairs of fixed points,
 $$
 \nu = \prod_{(x_i, x_j) \in X^\tau \times X^\tau} sgn(pf[w(x_i)])   sgn(pf[w(x_j)])   
 $$
 Indeed, the change of signs of Pfaffians is more important than the signs of Pfaffians themselves, 
 since once the sign is changed between a pair of fixed points, the local orientation is reversed under an adiabatic (sub)process.

 
\nocite{*}
\bibliographystyle{plain}
\bibliography{SW}

\begin{thebibliography}{10}

\bibitem{A89}
M.~Atiyah.
\newblock Topological quantum field theories.
\newblock {\em Inst. Hautes {\'E}tudes Sci. Publ. Math.}, 68:175--186, 1989.

\bibitem{DG14}
G.~{De Nittis} and K.~Gomi.
\newblock Classification of ``{R}eal'' {B}loch-bundles: Topological quantum
  systems of type {AI}.
\newblock {\em J. Geom. Phys.}, 86:303--338, 2014.

\bibitem{FM13}
D.~Freed and G.M. Moore.
\newblock Twisted equivariant matter.
\newblock {\em Annales Henri Poincar{\'e}}, 14(8):1927--2023, 2013.

\bibitem{FK06}
L.~Fu and C.~Kane.
\newblock Time reversal polarization and a $\mathbb{Z}_2$ adiabatic spin pump.
\newblock {\em Phys. Rev. B}, 74:195312, 2006.

\bibitem{HK10}
M.Z. Hasan and C.L. Kane.
\newblock Colloquium: Topological insulators.
\newblock {\em Reviews of Modern Physics}, 82(4):3045, 2010.

\bibitem{KLW15}
R.~Kaufmann, D.~Li, and B.~Wehefritz-Kaufmann.
\newblock Notes on topological insulators.
\newblock 2015.
\newblock arXiv: 1501.02874v2.

\bibitem{KLW1501}
R.~Kaufmann, D.~Li, and B.~Wehefritz-Kaufmann.
\newblock Topological insulators and {K}-theory.
\newblock 2015.
\newblock arXiv: 1510.08001.

\bibitem{MS74}
J.~Milnor and J.~Stasheff.
\newblock {\em Characteristic Classes}.
\newblock Princeton University Press, 1974.

\bibitem{MB07}
J.~Moore and L.~Balents.
\newblock Topological invariants of time-reversal-invariant band structures.
\newblock {\em Phys. Rev. B}, 75:121306, 2007.

\bibitem{QHZ08}
X.L. Qi, T.~Hughes, and S.C. Zhang.
\newblock Topological field theory of time-reversal invariant insulators.
\newblock {\em Phys. Rev. B}, 78:195424, 2008.

\bibitem{S68}
R.~Stong.
\newblock {\em Notes on cobordism theory}.
\newblock Princeton University Press, 1968.

\end{thebibliography}

\end{document}